\newtheorem{theorem}{Theorem}
\newtheorem{corollary}[theorem]{Corollary}
\newenvironment{proof}[1][Proof]{\noindent\textbf{#1.} }{\ \rule{0.5em}{0.5em}}
\newcommand{\ket}[1]{\left\vert#1\right\rangle}
\newcommand{\bra}[1]{\left\langle#1\right\vert}
\newcommand{\beq}{\begin{equation}}
\newcommand{\eeq}{\end{equation}}
\newcommand{\baq}{\begin{eqnarray}}
\newcommand{\eaq}{\end{eqnarray}}
\newcommand{\brac}[1]{\lbrace #1 \rbrace}
\def\Re{\mathrm{Re}}
\def\ket#1{| #1 \rangle}
\def\bra#1{\langle #1 |}
\def\D{\mathcal{D}}
\def\N{\mathcal{N}}
\begin{document}
	
	\title{Capacities of Quantum Amplifier Channels}
	\author{Haoyu Qi}
	\affiliation{Hearne Institute for Theoretical Physics and  Department of Physics \& Astronomy,
	Louisiana State University, Baton Rouge, Louisiana 70803, USA}
	\author{Mark~M.~Wilde}
	\affiliation{Hearne Institute for Theoretical Physics, Department of Physics \& Astronomy,
	Center for Computation and Technology,
	Louisiana State University, Baton Rouge, Louisiana 70803, USA}
	
	\keywords{trade-off coding, quantum Shannon theory, bosonic channels, entanglement,
		secret key}
	\pacs{03.67.Hk, 03.67.Pp, 04.62.+v}
	\begin{abstract}
	 Quantum amplifier channels are at the core of several physical processes. Not only do they model the optical process of spontaneous parametric down-conversion, but the transformation corresponding to an amplifier channel also describes the physics of the dynamical Casimir effect in superconducting circuits, the Unruh effect, and Hawking radiation. Here we study the communication capabilities of quantum amplifier channels. Invoking a recently established minimum output-entropy theorem for single-mode phase-insensitive Gaussian channels, we determine capacities of quantum-limited amplifier channels in three different scenarios. First, we establish the capacities of quantum-limited amplifier channels for one of the most general communication tasks, characterized by the trade-off between classical communication, quantum communication, and entanglement generation or consumption.
	Second, we establish capacities of quantum-limited amplifier channels for the trade-off between public classical communication, private classical communication, and secret key generation.
		Third, we determine the capacity region for a  broadcast channel induced by  the quantum-limited amplifier channel, and we also show that a fully quantum strategy outperforms those achieved by classical coherent detection strategies.
	In all three scenarios, we find that the capacities  significantly outperform communication rates achieved with a naive time-sharing strategy.
		
	\end{abstract}

\maketitle
\section{Introduction}
Shannon laid out the foundations of classical information theory in his breakthrough 1948 paper \cite{shannon1948bell}, where he determined the ultimate communication capabilities of classical communication channels. In today's telecommunication networks, free-space or fiber-optic channels are ubiquitous. These channels use electromagnetic waves as the basis for propagating information, but the quantum-mechanical nature of the electromagnetic field demands that we should take  quantum effects into account in order to evaluate the capacities of optical channels \cite{shapiro2009quantum}. To meet this challenge, quantum Shannon theory was devised in order to determine the ultimate communication rates of quantum communication channels for various information-processing tasks (see \cite{wilde2013quantum,holevo2013quantum} for reviews). 

Much progress has been made in the study of quantum communication over bosonic continuous systems \cite{weedbrook2012gaussian}.
Bosonic Gaussian channels have an elevated status in quantum information because they model the most common noise processes in optical communication like attenuation, amplification, and phase conjugation of optical signals
\cite{weedbrook2012gaussian,holevo2013quantum}.
The quantum-limited amplifier channel
\cite{PhysRev.128.2407,PhysRevD.26.1817}
 is also a fundamental building block of any such bosonic Gaussian channel, given that it can be decomposed as the serial concatenation of a quantum-limited attenuator followed by a quantum-limited amplifier  or its phase conjugate \cite{holevo2013quantum,Giovannetti2014}. 

Interestingly, the Bogoliubov transformation governing spontaneous parametric down-conversion in a nonlinear optical system \cite{clerk2010introduction} also describes a variety of different physical processes, such as the dynamical Casimir effect \cite{moore1970quantum}, the Unruh effect \cite{unruh1976notes} and Hawking radiation \cite{hawking1972black}. For example, the gain of a quantum amplifier channel is directly related to the acceleration of an observer in the setting of the Unruh effect. By employing Einstein's equivalence principle, the Unruh effect has a correspondence in the setting of Hawking radiation, in which the amplifier gain plays the role  of the surface gravity of the black hole. For a review on the close relationship between the above phenomena, see
Ref.~\cite{nation2012colloquium}. Related, several papers have studied quantum communication in situations where relativistic effects cannot be ignored
\cite{bradler2012quantum,bradler2015black}. Thus, the importance of quantum amplifier channels in various different fields of physics suggests that studying its communication capacities has both practical and theoretical relevance.

In this paper, we first determine communication trade-offs for a quantum-limited amplifier channel in which a sender has access to the input of the amplifier and a receiver to its output. The information trade-off problem is one of the most general information-processing tasks that one can consider for a point-to-point quantum communication channel. It allows the sender and receiver to simultaneously generate or consume any of the three fundamental information resources: classical information, quantum information, and shared entanglement. The  protocol from \cite{HW08GFP,wilde2012information,wilde2012quantum} (see also \cite[Chapter~25]{W16}) establishes an achievable rate region, which yields remarkable gains over the naive strategy of time sharing, as discussed in \cite{wilde2012information,wilde2012quantum}. In this work we prove that this achievable rate region is optimal, which establishes the capacity region for this setting. In order to do so, we establish some new mathematical properties of the entropy of the bosonic thermal state (see Appendix~\ref{sec:property}), a function which is of physical interest in a variety of
contexts. We suspect that these established properties could have
application in the analysis of other communication problems and in
studies of quantum thermodynamics, but this is more appropriate to
remain as the topic of future work.

We also consider the trade-off between
public classical bits, private classical bits, and secret key bits \cite{WH10}, and we establish the capacity region in this setting as well. This capacity region clearly has relevance when using a channel for the communication of secret information in addition to ordinary, public classical information. 

Beyond the point-to-point setup, we also determine the capacity region for the single-sender,
two-receiver broadcast channel induced by a unitary dilation of the quantum-limited amplifier channel.
 We do so by first giving a rate region achieved by inputting coherent states \cite{GK04} to a noisy amplifier channel. We find that this rate region improves upon those achieved using traditional strategies such as coherent homodyne or heterodyne detection. We also prove that this rate region is optimal for quantum-limited amplifier channels by employing similar techniques that we use for the first two scenarios mentioned above. These techniques are different when compared to those used in previous works
\cite{wilde2012information,guha2007classical}
 for the setting of the pure-loss channel. 
 
 
 This paper is organized as follows. In Section~\ref{sec:conjecture}, we review the main result of \cite{PTG16}, which establishes a minimum output-entropy theorem essential for our developments here. In Section \ref{sec:trade-off}, we consider the communication  trade-off for a quantum-limited amplifier channel. After briefly reviewing the characterization of the trade-off capacity region and the achievable rate region established in \cite{wilde2012quantum}, we prove that this rate region is optimal. We then show that the trade-off capacity region outperforms that achievable with a naive time-sharing strategy. We also find that capacities decrease with increasing amplifier gain. We then consider the unitary dilation of the quantum-limited amplifier channel as a quantum broadcast channel in Section~\ref{sec:broadcast}. In the first part of Section~\ref{sec:broadcast}, we determine an achievable rate region for two receivers by using coherent-state encoding. In the second part of Section~\ref{sec:broadcast}, we prove that this achievable rate region is optimal. In the third part of Section~\ref{sec:broadcast}, we show that the capacity region outperforms those achieved by using homodyne and heterodyne detection.
 In Section~\ref{sec:public-private}, we consider the trade-off between public and private classical communication. We determine these trade-off capacities for quantum-limited amplifier channels by employing techniques similar to those from Sections~\ref{sec:trade-off} and \ref{sec:broadcast}. 
 Finally, we discuss the relationship between entropy conjectures and capacities of bosonic Gaussian channels in Section~\ref{sec:discussion}. We  conclude in Section~\ref{sec:conclusion} with a summary and some open questions.

\section{Minimum output-entropy theorem}
\label{sec:conjecture}

All of our converse proofs in this work rely on the following minimum output-entropy theorem, which holds for a single-mode, phase-insensitive quantum-limited amplifier (and its weak conjugate \cite{caruso2006degradability}) channel with a given input entropy constraint \cite{PTG16}. We restate this result as the following theorem:
\begin{theorem}[\cite{PTG16}]
	\label{th:min-out-ent-conj}
	Consider a single-mode, phase-insensitive amplifier channel $\N_{A\rightarrow B}$. Let $ H_0>0$ be a positive constant. For any input state $\rho_A$ such that $H(\rho_A)\geq H_0$, the output von Neumann entropy $H(\N_{A\rightarrow B}(\rho_A))$ is minimized when $\rho_A$ is a thermal state with mean photon number $g^{-1}(H_0)$, where
	\begin{equation}\label{eq:g}
	g(x)\equiv (x+1)\log_2(x+1)-x\log_2 x~.
	\end{equation}
	is the entropy of a thermal state with mean photon number $x$.
	The same is true for the quantum-limited weak conjugate amplifier \cite{qi2016thermal} (the complementary channel of
	$\N_{A\rightarrow B}$ \cite{caruso2006degradability}).
\end{theorem}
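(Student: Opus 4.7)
The plan is to prove Theorem~\ref{th:min-out-ent-conj} by reducing it to the sharp inequality
\begin{equation}
H(\N(\rho_A)) \;\geq\; g\!\left(\kappa\, g^{-1}(H(\rho_A)) + \kappa - 1\right),
\end{equation}
with equality iff $\rho_A$ is a thermal state, where $\kappa > 1$ denotes the amplifier gain. Since $\N$ maps the thermal state of mean photon number $N$ to the thermal state of mean photon number $\kappa N + \kappa - 1$, saturating this inequality at $H(\rho_A) = H_0$ yields precisely the claimed minimum output entropy $g(\kappa g^{-1}(H_0) + \kappa - 1)$. The full content of the theorem is thus this inequality together with its equality conditions.

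First, I would identify the candidate minimizer through a Lagrangian / maximum-entropy argument: among states with fixed entropy $H_0$ and phase-covariance, the isotropic thermal state $\tau(g^{-1}(H_0))$ is singled out by the Gaussian-optimality philosophy for phase-insensitive Gaussian channels. Second, I would set up the Stinespring dilation $\N_{A\to B}(\rho) = \Tr_E[U(\rho \otimes |0\rangle\langle 0|_E)U^{\dagger}]$, with $U$ a two-mode squeezer and $E$ the vacuum environment. For pure inputs, the complementary-channel identity $H(\N(\psi_A)) = H(\widetilde{\N}(\psi_A))$ provides an immediate symmetric link between the amplifier and its weak conjugate; for mixed inputs, I would purify $\rho_A$ by a reference $R$ and exploit strong subadditivity together with concavity of the von Neumann entropy to reduce back to the pure-input setting.

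Third, and most importantly, the sharp inequality would be obtained by embedding $\N$ into a one-parameter semigroup of quantum-limited amplifiers $\{\N_t\}_{t\ge 0}$ with $\N_0 = \mathrm{id}$ and $\N_T = \N$, and tracking the entropy gap
\begin{equation}
\Delta(t) \;=\; H(\N_t(\rho)) - g\!\left(\kappa(t)\, g^{-1}(H(\rho)) + \kappa(t) - 1\right).
\end{equation}
A De~Bruijn-type identity for bosonic Gaussian semigroups expresses $\tfrac{d}{dt}\Delta(t)$ as a quantum Fisher-information-like quantity, which one must show is nonnegative and vanishes only on thermal inputs. Since $\Delta(0)=0$, nonnegativity of the derivative forces $\Delta(T) \geq 0$, yielding the desired inequality together with the equality characterization. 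The corresponding statement for the weak-conjugate channel then follows by an analogous argument, since its Stinespring dilation uses the same two-mode squeezer with the roles of $B$ and $E$ interchanged.

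The main obstacle is the third step. Establishing nonnegativity of the Fisher-information-like gap derivative is essentially the quantum conditional entropy power inequality for phase-insensitive Gaussian channels, and this is precisely the core technical contribution of \cite{PTG16}. Concavity of $\rho \mapsto H(\N(\rho))$ alone does not single out Gaussian minimizers; the sharp infinitesimal inequality is genuinely needed. A secondary nuisance is the noncompactness of the bosonic state space: some energy-localization argument (for instance, restricting to finite-dimensional photon-number cutoffs and passing to the limit) is required to guarantee that a minimizing sequence does not drift to infinity and that the infimum is actually attained at the thermal state.
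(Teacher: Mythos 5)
Your third step is not a proof but a placeholder: as you yourself concede, the nonnegativity (and the equality characterization) of the Fisher-information-like quantity controlling $\tfrac{d}{dt}\Delta(t)$ along the amplifier semigroup is ``precisely the core technical contribution of \cite{PTG16}.'' Infinitesimally, that monotonicity statement is equivalent to the constrained minimum-output-entropy theorem you are trying to prove, so the argument reduces the theorem to itself (or to the key lemma of the cited work) without supplying the missing ingredient, namely a sharp quantum Stam/isoperimetric-type inequality comparing the entropy-production rate of an arbitrary input with that of the thermal state of equal entropy. Note also that the paper under review does not prove this statement at all: Theorem~\ref{th:min-out-ent-conj} is imported verbatim from \cite{PTG16}, whose actual proof proceeds by different means (reduction of the optimizer to passive states via majorization, as in \cite{de2015passive}, followed by a direct analysis of the single-mode constrained problem), not by a De~Bruijn/semigroup argument; so your sketch is neither the paper's route nor a self-contained alternative.

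A second concrete flaw is the proposed reduction of mixed inputs to pure inputs via purification, strong subadditivity, and concavity. The hypothesis $H(\rho_A)\geq H_0>0$ excludes pure inputs altogether, so the constrained problem is intrinsically a mixed-state problem; the identity $H(\N(\psi_A))=H(\N^{c}(\psi_A))$ is only directly useful in the unconstrained case $H_0=0$, where the vacuum is optimal. Moreover, concavity of entropy bounds the output entropy of a mixture from \emph{below} by the average of the output entropies, which is the wrong direction when one wants to lower-bound $H(\N(\rho_A))$ for a general mixed $\rho_A$ in terms of pure-state data. The compactness/energy-localization concern you raise is legitimate but secondary; the essential gap is the unproven sharp infinitesimal inequality in your third step, without which the proposal establishes nothing beyond identifying the thermal state as a natural candidate.
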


Theorem~\ref{th:min-out-ent-conj} provides lower bounds for certain terms in the capacity regions in \eqref{eq:trade-off-region-1st-bnd}--\eqref{eq:trade-off-region-3rd-bnd} and \eqref{eq:amplify-rate-Bob}--\eqref{eq:amplify-rate-Charlie}, which are crucial for our converse proofs. Due to additivity issues of capacity regions in quantum information theory, proofs of converses generally require a multi-mode version of the results in \cite{PTG16}. However, a quantum-limited amplifier channel, the complementary channel of which is entanglement-breaking, is a Hadamard channel \cite{Holevo2008,giovannetti2014ultimate}. It is known that the capacity regions of both the information trade-off and broadcast problems are single-letter for Hadamard channels \cite{bradler2010trade,wilde2012quantumA,WH10,wang2016capacities}.

\section{Trading quantum and classical resources}
\label{sec:trade-off}
Our first result concerns the transmission (or consumption) of classical bits, quantum bits, and shared entanglement along with the consumption of many independent uses of a quantum-limited amplifier channel. The communication  trade-off is characterized by \textit{rate triples} $(C,Q,E)$, where $C$ is the net rate of classical communication, $Q$ is the net rate of quantum communication, and $E$ is the net rate of entanglement generation. 
The triple trade-off capacity region
of a quantum channel $\N$
is the regularization of the union of regions of the following form \cite{wilde2012quantumA}
(see also \cite[Chapter~25]{W16}):
\begin{align}
C+2Q    &\leq H(  \mathcal{N}(  \rho)  )
+\sum_{x}p_X(  x)  \left[  H(  \rho
_{x})  -H(  \mathcal{N}^{c}(  \rho_{x})  )
\right]  ,\nonumber\\
Q+E    &\leq\sum_{x}p_X(  x)  \left[  H(  \mathcal{N}(
\rho_{x})  )  -H(  \mathcal{N}^{c}(  \rho_{x})
)  \right]  ,\nonumber\\
C+Q+E    &\leq H(  \mathcal{N}(  \rho)  )  -\sum
_{x}p_X(  x)  H(  \mathcal{N}^{c}(  \rho_{x})
), \label{eq:general-trade-off-region}
\end{align}
where the union is with respect to all possible input ensembles $\brac{ p_X(x), \rho_x}$
and $\rho \equiv \sum_x p_X(x) \rho_x$. Here $\N^c$ is the complementary channel of $\N$
\cite{W16}.

For a quantum-limited amplifier channel with gain parameter $\kappa \in [1,+\infty)$, the input-output transformation in the Heisenberg picture is given by the following equation:
\begin{align}
\hat{b} = \sqrt{\kappa}\hat{a}+\sqrt{\kappa-1}\hat{e}^\dagger~,
\end{align}
where $\hat{a}$, $\hat{b}$, and $\hat{e}$ are the field-mode annihilation operators corresponding to the sender's input mode, the receiver's output mode, and an environmental input in the vacuum state, respectively. In order to have a meaningful and practical communication task, we assume that the mean photon number of the input state is constrained to be no larger than $N_S \in (0,+\infty)$ for each use of the channel.

\subsection{Achievable rate region}

The achievability part of the capacity theorem for the quantum-limited amplifier channel was already
established in \cite{wilde2012information,wilde2012quantum}. The coding strategy is to employ an input ensemble of Gaussian-distributed phase-space displacements of the two-mode squeezed vacuum. We restate this result as the following theorem, which is given as Theorem~3 in \cite{wilde2012quantum}:
\begin{theorem}\label{thm:ach-region-amp}
	An achievable rate region for
	a quantum-limited amplifier channel with amplifier gain $\kappa\geq 1$ is given by the union
	of regions of the following form:
	\begin{align}
	C+ 2Q &\leq g(\lambda N_S) +g(\kappa N_S +\bar{\kappa}) 
	- g(\bar{\kappa}[\lambda N_S+1])~,\label{eq:trade-off-region-1st-bnd}\\
	Q +E & \leq g(\kappa \lambda N_S +\bar{\kappa}) - g(\bar{\kappa}[\lambda N_S+1])~,\label{eq:trade-off-region-2nd-bnd}\\
	C+Q+E & \leq g(\kappa N_S+\bar{\kappa}) - g(\bar{\kappa}[\lambda N_S+1])~\label{eq:trade-off-region-3rd-bnd},
	\end{align} 
	where $\lambda\in [0,1]$ is a photon-number-sharing parameter and g(x) is defined in \eqref{eq:g}. The parameter $\bar{\kappa} \equiv \kappa-1$ denotes the mean number of photons generated by the channel when  the vacuum is input. 
\end{theorem}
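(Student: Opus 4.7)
The plan is to instantiate the general achievable region \eqref{eq:general-trade-off-region} with a Gaussian input ensemble and then compute each entropy term in closed form using standard Gaussian machinery. Concretely, I would take $\rho_x = D(\alpha)\,\tau(\lambda N_S)\,D^\dagger(\alpha)$, where $\tau(\lambda N_S)$ is a thermal state of mean photon number $\lambda N_S$, the classical label $x$ is identified with $\alpha\in\CC$, and the prior $p_X$ is a complex Gaussian of variance $(1-\lambda)N_S$. The photon-number-sharing parameter $\lambda\in[0,1]$ apportions the input budget between a ``quantum'' thermal core (operationally realized as the sender's share of a two-mode squeezed vacuum of parameter $\lambda N_S$, whose purification carries the quantum/entanglement information) and a ``classical'' Gaussian displacement that carries the classical message. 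With this choice, the average input state $\rho = \int d^2\alpha\,p_X(\alpha)\,D(\alpha)\tau(\lambda N_S)D^\dagger(\alpha)$ is itself a thermal state with mean photon number $\lambda N_S + (1-\lambda)N_S = N_S$, so the energy constraint is saturated.

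Next I would evaluate each of the four entropies appearing in \eqref{eq:general-trade-off-region} using two elementary facts: phase-space displacements are unitary and hence preserve the von Neumann entropy, and every single-mode phase-insensitive Gaussian channel maps thermal states to thermal states. For the quantum-limited amplifier of gain $\kappa$ one has $\N(\tau(N)) = \tau(\kappa N + \bar\kappa)$ and $\N^c(\tau(N)) = \tau(\bar\kappa(N+1))$. Combined with the standard identity that a Gaussian average of displaced thermal states is again a (broader) thermal state, these rules give
\begin{align}
H(\rho_x) &= g(\lambda N_S), \\
H(\N(\rho)) &= g(\kappa N_S + \bar\kappa), \\
H(\N(\rho_x)) &= g(\kappa\lambda N_S + \bar\kappa), \\
H(\N^c(\rho_x)) &= g(\bar\kappa(\lambda N_S + 1)).
\end{align}
Substituting these four values into \eqref{eq:general-trade-off-region} and collecting terms reproduces the three bounds \eqref{eq:trade-off-region-1st-bnd}--\eqref{eq:trade-off-region-3rd-bnd}.

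The step I expect to require the most care is not this algebraic evaluation but rather justifying that the finite-dimensional achievability formula \eqref{eq:general-trade-off-region} applies to the infinite-dimensional bosonic channel under a mean-photon-number constraint. Doing this rigorously requires truncating the input to finite-energy subspaces, applying the quantum dynamic capacity theorem on each truncation, and then passing to a limit using continuity of the entropic quantities on the set of states of bounded mean photon number (and verifying that the displaced TMSV ensemble is an admissible continuous-alphabet input in the limit). Since this machinery is already developed in \cite{wilde2012information,wilde2012quantum} for precisely the class of Gaussian ensembles used here, I would simply cite those works for the extension step and concentrate the present effort on the Gaussian evaluation sketched above.
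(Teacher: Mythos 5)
Your proposal is correct and follows essentially the same route as the paper, which simply invokes the achievability result of Theorem~3 of \cite{wilde2012quantum}: that proof uses exactly your ensemble of Gaussian-distributed phase-space displacements of the two-mode squeezed vacuum (i.e., displaced thermal states $D(\alpha)\tau(\lambda N_S)D^\dagger(\alpha)$ with a complex-Gaussian prior of variance $(1-\lambda)N_S$), evaluates the same four entropies via displacement covariance and the thermal-to-thermal action $\N(\tau(N))=\tau(\kappa N+\bar\kappa)$, $\N^c(\tau(N))=\tau(\bar\kappa(N+1))$, and handles the infinite-dimensional/energy-constrained subtleties by the limiting arguments you cite. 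Your substitution into \eqref{eq:general-trade-off-region} correctly reproduces \eqref{eq:trade-off-region-1st-bnd}--\eqref{eq:trade-off-region-3rd-bnd}.
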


\subsection{Outer bound for the capacity region}

Our contribution here is to prove that the rate region in Theorem~\ref{thm:ach-region-amp} is equal to the capacity region.
\begin{theorem}\label{thm:main-result}
	The triple trade-off capacity region for
	a quantum-limited amplifier channel with amplifier gain $\kappa\geq 1$ is equal to the rate region given in Theorem~\ref{thm:ach-region-amp}.
\end{theorem}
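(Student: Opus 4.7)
The plan is to establish a matching converse bound for each of the three inequalities \eqref{eq:trade-off-region-1st-bnd}--\eqref{eq:trade-off-region-3rd-bnd} that define the achievable region of Theorem~\ref{thm:ach-region-amp}. Since the quantum-limited amplifier is a Hadamard channel, as noted following Theorem~\ref{th:min-out-ent-conj}, the triple trade-off capacity region admits a single-letter characterization, reducing the task to upper-bounding the expressions in \eqref{eq:general-trade-off-region} over all input ensembles $\{p_X(x), \rho_x\}$ whose average state has mean photon number at most $N_S$.

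For any such ensemble, I would introduce the effective thermal photon number $M_x := g^{-1}(H(\rho_x))$. By the maximum-entropy principle for bosonic states of bounded mean photon number, $M_x \leq \mathrm{Tr}[\hat{a}^\dagger \hat{a}\, \rho_x]$, so the quantity $\lambda N_S := \sum_x p_X(x) M_x$ lies in $[0, N_S]$, yielding $\lambda \in [0,1]$ as the natural photon-number-sharing parameter read off from the ensemble. Applying Theorem~\ref{th:min-out-ent-conj} to both $\mathcal{N}$ and its weak conjugate $\mathcal{N}^c$ then gives the pointwise lower bounds $H(\mathcal{N}(\rho_x)) \geq g(\kappa M_x + \bar{\kappa})$ and $H(\mathcal{N}^c(\rho_x)) \geq g(\bar{\kappa}(M_x+1))$, each saturated when $\rho_x$ is a thermal state with $M_x$ photons. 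Combined with the standard maximum-entropy bound $H(\mathcal{N}(\rho)) \leq g(\kappa N_S + \bar{\kappa})$, these inequalities reduce each line of \eqref{eq:general-trade-off-region} to a sum over $x$ of scalar expressions in $\{M_x\}$.

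The final step is to collapse those sums to single-parameter formulas via Jensen's inequality. For this I would invoke two concavity lemmas to be proved in Appendix~\ref{sec:property}, namely concavity in $M$ of
\begin{equation}
f_1(M) := g(M) - g(\bar{\kappa}(M+1)), \qquad h(M) := g(\kappa M + \bar{\kappa}) - g(\bar{\kappa}(M+1)).
\end{equation}
Concavity of $f_1$ yields $\sum_x p_X(x)\, f_1(M_x) \leq f_1(\lambda N_S)$, producing \eqref{eq:trade-off-region-1st-bnd}; a parallel argument using $h$ delivers \eqref{eq:trade-off-region-2nd-bnd}; and \eqref{eq:trade-off-region-3rd-bnd} follows from concavity of $g(\bar{\kappa}(\,\cdot\,+1))$ together with the appropriate choice of $\lambda$ adapted to the ensemble so that all three half-space bounds are simultaneously respected at the same value of $\lambda$.

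The hard part will be establishing concavity of $f_1$ and $h$, since each is a difference of concave functions and so could a priori be convex; the sign of the second derivative must be extracted by careful manipulation of the first and second derivatives of $g$, which is the technical content to be deferred to Appendix~\ref{sec:property}. A second, more delicate obstacle arises for \eqref{eq:trade-off-region-2nd-bnd}: bounding $H(\mathcal{N}(\rho_x)) - H(\mathcal{N}^c(\rho_x))$ from above by $h(M_x)$ for an arbitrary, possibly non-Gaussian input is a Gaussian-extremality statement that does not follow from Theorem~\ref{th:min-out-ent-conj} alone, since that theorem only lower-bounds each output entropy separately; closing this step will likely require combining Theorem~\ref{th:min-out-ent-conj} with the degradability of the quantum-limited amplifier and the resulting concavity of coherent information in the input.
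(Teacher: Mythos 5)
Your skeleton is right in outline (single-letter region via the Hadamard property, then bound each entropy term in \eqref{eq:general-trade-off-region} consistently with one $\lambda$, using Theorem~\ref{th:min-out-ent-conj} for the lower bounds), but the step where you ``collapse the sums via Jensen'' is exactly where the argument breaks. With your definition $\lambda N_S:=\sum_x p_X(x)M_x$ (an arithmetic mean of the $M_x$), concavity of $M\mapsto g(\bar\kappa(M+1))$ gives
$\sum_x p_X(x)\,g(\bar\kappa(M_x+1))\le g(\bar\kappa(\lambda N_S+1))$,
which is the \emph{wrong direction}: for \eqref{eq:trade-off-region-3rd-bnd} you need $\sum_x p_X(x)H(\N^c(\rho_x))\ge g(\bar\kappa(\lambda N_S+1))$. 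You could repair this by choosing a smaller $\lambda''$ for which equality holds, but then the input-entropy bound $\sum_x p_X(x)H(\rho_x)\le g(\lambda'' N_S)$ needed for \eqref{eq:trade-off-region-1st-bnd} is no longer guaranteed; this tension between the bounds is the actual crux of the converse, and ``the appropriate choice of $\lambda$ adapted to the ensemble'' is precisely what has to be proved. The paper resolves it by defining $\lambda$ from the \emph{average output} entropy, $\sum_x p_X(x)H(\N(\rho_x))=g(\kappa\lambda N_S+\bar\kappa)$, so that \eqref{eq:amp-ineq-3} holds with equality by construction, and then (i) proving a new property of $g$ (Theorem~\ref{thm:second-prop-gx}, the $q>1$ analogue of Guha's lemma, needed because $\kappa>1$) to show that the input-entropy bound \eqref{eq:amp-ineq-2} still holds with this output-defined $\lambda$, and (ii) passing through the degrading channel $\D$ (with $\kappa'=(2\kappa-1)/\kappa$) and invoking Theorem~\ref{thm:guha-A-3} --- which is a statement about $g$-means under affine maps, not mere concavity --- to obtain \eqref{eq:amp-ineq-4}. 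Neither of these two lemmas reduces to the concavity of your $f_1$ and $h$.

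Your second acknowledged obstacle is also real but is sidestepped, not closed, by the paper's construction: because $\lambda$ is defined so that $\sum_x p_X(x)H(\N(\rho_x))$ \emph{equals} $g(\kappa\lambda N_S+\bar\kappa)$, no upper bound on $H(\N(\rho_x))$ in terms of the input entropy is ever needed (and indeed none exists without the photon-number constraint, since a state of fixed entropy can have arbitrarily large mean energy and hence arbitrarily large output entropy). The entire burden for \eqref{eq:trade-off-region-2nd-bnd} then falls on the lower bound \eqref{eq:amp-ineq-4} for the complementary channel, handled as above. So the proposal as written does not yield a proof: the third bound's Jensen step is reversed, the single-$\lambda$ consistency is asserted rather than established, and the second bound is left open by your own admission.
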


\begin{proof}
	We first recall that the capacity region of a quantum limited amplifier channel is single-letter \cite{bradler2010trade} due to the fact that a quantum-limited amplifier channel is a Hadamard channel \cite{Holevo2008,giovannetti2014ultimate}. Thus, there is no need to consider the regularization of \eqref{eq:general-trade-off-region}. To give an upper bound on the single-letter capacity region of the quantum-limited amplifier channel, we prove that for all input ensembles
	$\brac{ p_X(x), \rho_x}$ there exists a $\lambda \in [0,1]$ such that 
	the following four inequalities hold
	\begin{align}
	H(\N(\rho))&\leq g(\kappa N_S +\kappa -1)
	\label{eq:amp-ineq-1}~,\\
	\sum_x p_X(x)H(\rho_x)&\leq g(\lambda N_S)
	\label{eq:amp-ineq-2}~,\\
	\sum_x p_X(x)H(\N(\rho_x)) &\leq
	g(\kappa\lambda N_S +\kappa -1)~,\label{eq:amp-ineq-3}\\
	\sum_x p_X(x)H(\N^{c}(\rho_x)) &\geq
	g((\kappa-1)(\lambda N_S+1))~. \label{eq:amp-ineq-4}
	\end{align}

	We start by establishing the inequality in \eqref{eq:amp-ineq-1}:
	\begin{equation}
	H(\N(\rho)) 
	\leq  g(\kappa N_S + \kappa -1)~.
	\label{eq:first-ineq-1}
	\end{equation}
	This inequality follows from the facts that the output state has mean photon number no larger than $\kappa N_S +\kappa -1$ when the input mean photon number is no larger than $ N_S$ and because the thermal state of mean photon number $\kappa N_S +\kappa -1$ realizes the maximum entropy at the output.
	
	We now argue the inequalities in \eqref{eq:amp-ineq-2}
	and \eqref{eq:amp-ineq-3}. Consider that concavity of entropy and that the thermal state realizes the maximum entropy imply the following bound:
	\begin{align}
	\sum_x p_X(x) H(\rho_x)\leq H(\rho)\leq g(N_S)~.
	\end{align}
	Since $g(x)$ is monotonically increasing, there exists a $\lambda' \in [0,1]$ such that
	\begin{align}
	\sum_xp_X(x)H(\rho_x)= g(\lambda' N_S)~. \label{eq:input-ent-ineq}
	\end{align}
	From concavity of entropy and \eqref{eq:first-ineq-1}, we find that
	\begin{align}
	\sum_xp_X(x)H(\N(\rho_x)) &\leq H(\N(\rho)) \\
	&\leq g(\kappa N_S+\kappa - 1)~.
	\end{align}
	Due to the fact that the vacuum-state input realizes the minimum output entropy for any phase-insensitive quantum Gaussian channel \cite{giovannetti2014ultimate}, the following lower bound applies
	\begin{align}
	H(\N(\rho_x)) \geq g(\kappa-1)~. \label{eq:min-out-ent-vac-in}
	\end{align}
	Since $g(x)$ is monotonically increasing and since we have shown that
	\begin{equation}
	g(\kappa-1) \leq \sum_xp_X(x)H(\N(\rho_x))  \leq g(\kappa N_S+\kappa - 1),
	\end{equation}
	there exists $\lambda\in[0,1]$ such that
	\begin{align}\label{eq:output-ent-ineq}
	\sum_xp_X(x)H(\N(\rho_x)) = g(\lambda\kappa N_S +\kappa-1)~.
	\end{align}
	However, $\lambda$ and $\lambda'$ are different in general. But we can use Theorem~\ref{thm:second-prop-gx} in Appendix \ref{sec:property} to establish that $\lambda'\leq\lambda$. 
	To use it we need to know the entropy of the input state. Supposing that the mean photon number of $\rho_x$ is $N_{S,x}$, we have that 
	\begin{equation}
	H(\rho_x)  
	\leq  g(N_{S,x})~.
	\end{equation}
	Therefore there exists $\lambda'_x \in [0,1]$ such that
	\begin{align}
	H(\rho_x) = g(\lambda_x' N_{S,x})~.
	\end{align}
	Now employing Theorem~\ref{th:min-out-ent-conj} for the quantum-limited amplifier channel, we have that
	\begin{align}
	H(\N(\rho_x))\geq g(\kappa\lambda'_x N_{S,x} +\kappa -1)~,
	\end{align}
	which in turn implies that
	\begin{align}
	&\!\!\!\!g(\lambda \kappa N_S +\kappa-1)\nonumber\\&=\sum_xp_X(x)H(\N(\rho_x))\\
	&\geq \sum_xp_X(x)g(\kappa\lambda'_x N_{S,x} +\kappa -1)~.
	\label{eq:lower-bnd-HBgivX}
	\end{align}
	Together with $\sum_xp_X(x)g(\lambda'_x N_{S,x}) = g(\lambda' N_S)$, and using Theorem~\ref{thm:second-prop-gx} in Appendix~\ref{sec:property} with $q = \kappa$  we find that
	\begin{multline}
	\sum_xp_X(x) g(\kappa\lambda'_x N_{S,x} +\kappa -1)\\
	\geq g(\kappa \lambda' N_S +\kappa-1)~,
	\label{lower-bnd-HBgivX-2}
	\end{multline}
	which, by combining \eqref{eq:lower-bnd-HBgivX}
	and \eqref{lower-bnd-HBgivX-2}, implies that
	\begin{align}
	g(\lambda\kappa N_S +\kappa-1)\geq g(\kappa \lambda' N_S +\kappa-1)~.
	\end{align}
	Since $g$ is monotonically increasing and it has a well-defined inverse function, we find that
	\begin{align}
	\lambda \geq \lambda'~,
	\end{align}
	which, after combining with \eqref{eq:input-ent-ineq} and the monotonicity of $g(x)$, implies that
	\begin{align}
	\sum_xp_X(x)H(\rho_x)\leq g(\lambda N_S)~.
	\end{align}
	This concludes the proof of the inequalities  in \eqref{eq:amp-ineq-2}
	and~\eqref{eq:amp-ineq-3}.
	
	To prove the last bound in \eqref{eq:amp-ineq-4}, by \eqref{eq:min-out-ent-vac-in} and	
	\begin{equation}
	H(\N(\rho_x)) 
	\leq g(\kappa N_{S,x}+\kappa-1)~,
	\end{equation}
	we can conclude that there exists $\lambda_x \in [0,1]$ such that the following equality holds
	\begin{align}\label{eq:out-ent}
	H(\N(\rho_x)) = g(\lambda_x\kappa N_{S,x} +\kappa-1)~.
	\end{align}
	
	The quantum-limited amplifier
	channel $\mathcal{N}$ is degradable \cite{caruso2006degradability},
	and
	its degrading channel $\D_{B\rightarrow C}$  is the weakly-conjugate channel of the quantum-limited amplifier with $\kappa' = (2\kappa -1)/\kappa$ \cite{caruso2006degradability}. The main property of this degrading channel that we need is that an input thermal state
	of mean photon number $K$ leads to an output
	thermal state of mean photon number
	$(\kappa'-1)(K+1)$.
	Theorem~\ref{th:min-out-ent-conj} applied to this case
	gives that for given input entropy $g(K)$, the minimum output entropy of $\D_{B\rightarrow C}$ is
	equal to $g((\kappa'-1)(K+1))$.
	By applying it, we find that
	\begin{align}
	\nonumber
	& \!\!\!\!\sum_x p_X(x)H(\N^{c}(\rho_x)) \nonumber\\
	&\geq \sum_xp_X(x)g((\kappa'-1)(\lambda_x\kappa N_{S,x} +\kappa))\\
	&=\sum_xp_X(x) g((\kappa-1)\lambda_x N_{S,x} + \kappa-1).
	\end{align}
	Since $\sum_xp_X(x)g(\kappa\lambda_x N_{S,x} +\kappa-1) = g(\lambda\kappa N_S+\kappa-1)$, using 
	Theorem~\ref{thm:guha-A-3}
	in Appendix \ref{sec:property} with $q = (\kappa -1)/\kappa$ and $C=(\kappa -1)/\kappa$, we find that
	\begin{align}
	\nonumber
	& \!\!\!\!\sum_x p_X(x)H(\N^{c}(\rho_x)) \nonumber\\
	&\geq g(q(\lambda\kappa N_S+\kappa-1)+(\kappa-1)/\kappa)~,\\
	&=g((\lambda N_S+1)(\kappa-1))~.
	\end{align}
	This concludes our proof for the four bounds in \eqref{eq:amp-ineq-1}--\eqref{eq:amp-ineq-4}. Together with the achievability part in \cite{wilde2012quantum} (recalled as Theorem~\ref{thm:ach-region-amp}), this concludes the proof that the union of regions given by \eqref{eq:trade-off-region-1st-bnd}--\eqref{eq:trade-off-region-3rd-bnd} is equal to the quantum dynamic capacity region for
	the quantum-limited amplifier channel.
\end{proof}

\bigskip
Returning to our discussion from the introduction, we note that Theorem~\ref{thm:main-result} completely characterizes the communication abilities of any phase-insensitive quantum-limited amplifier channel, particular examples of this channel occurring in a number of scenarios of physical interest, including spontaneous parametric down-conversion in a nonlinear optical system \cite{clerk2010introduction}, the dynamical Casimir effect \cite{moore1970quantum}, the Unruh effect \cite{unruh1976notes}, and Hawking radiation \cite{hawking1972black}. That is, if one desires to use any such channel for sending classical and quantum information along with the assistance of shared entanglement, then Theorem~\ref{thm:main-result} sets the ultimate limits for such a task. Theorem~\ref{thm:main-result} thus subsumes and places a capstone on much previous literature in quantum information having to do with capacities of phase-insensitive, quantum-limited amplifier channels.

\subsection{Comparison with time-sharing strategy and large $\kappa$ limit}
\begin{figure}
	\centering
	\includegraphics[width=\columnwidth]{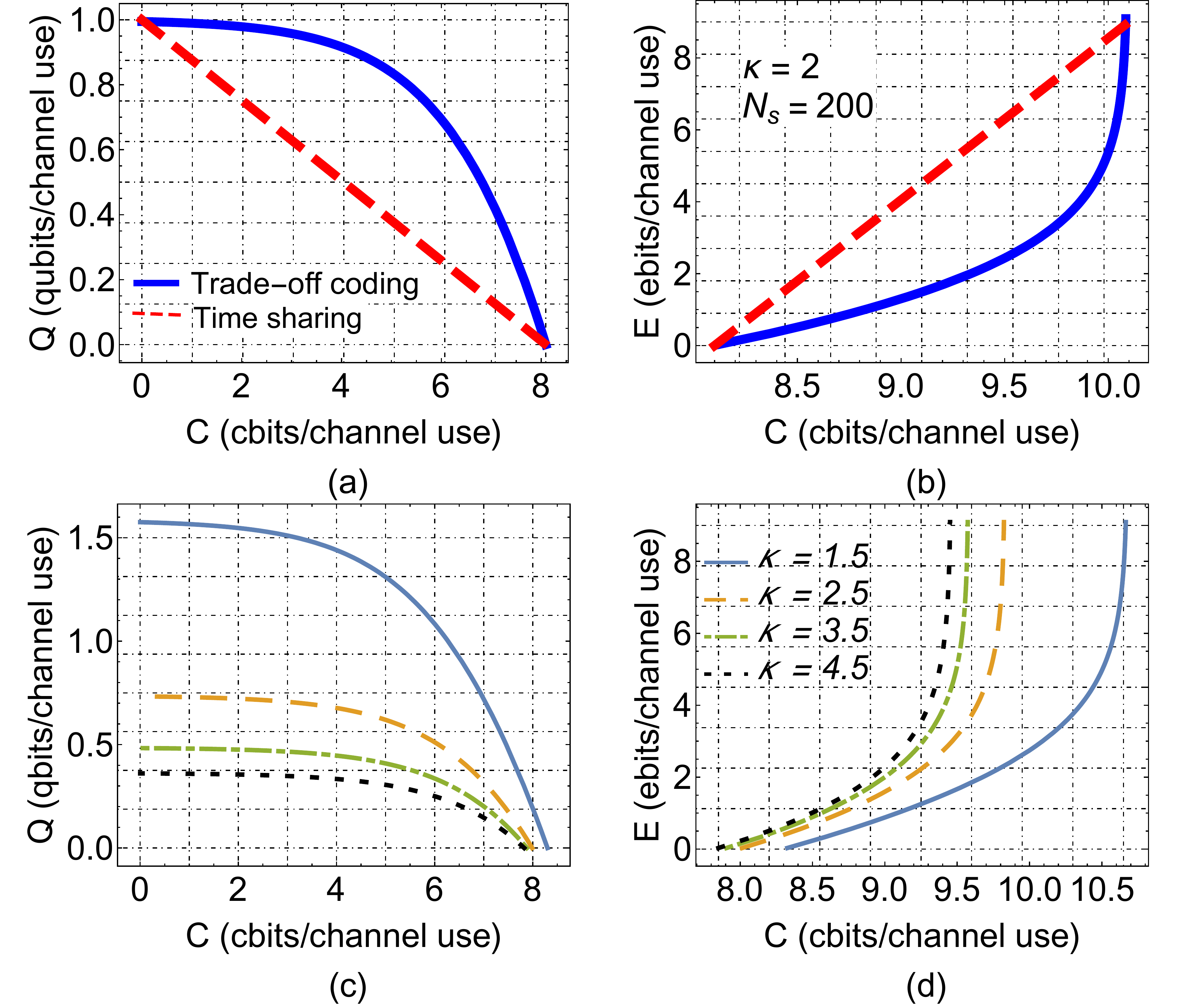}
	\caption{We consider a quantum-limited amplifier channel with $\kappa = 2$ and mean photon number constraint $N_S=200$. In (a), we plot the $(C,Q)$ trade-off. The maximum quantum capacity is equal to $\log_2(2)-\log_2(1)=1$ qubit per channel use. A trade-off coding strategy shows an improvement compared to time sharing, wherein we see that the classical data rate can be boosted while still maintaining a high quantum transmission rate.  In (b) we plot the $(C,E)$ trade-off. The sender and the receiver share entanglement, and the sender would like to transmit classical information while minimizing the consumption of entanglement. As can be seen, with trade-off coding, the sender can significantly reduce the consumption of entanglement while still keeping the classical communication rate near to its maximum value. 		
	In (c) and (d) we plot the capacity region for
	the $(C,Q)$ and $(C,E)$ trade-off with amplifier gain $\kappa = 1.5$, $2.5$, $3.5$, and $4.5$. Each capacity region shrinks as the amplifier gain $\kappa$ increases.}
	\label{fig:bosonic-trade-offs}
\end{figure}

Figure~\ref{fig:bosonic-trade-offs} displays two  special cases of the capacity region in \eqref{eq:trade-off-region-1st-bnd}--\eqref{eq:trade-off-region-3rd-bnd}. We consider a quantum-limited amplifier channel with gain $\kappa = 2$ and choose the mean  input photon number to be $N_S =200$. In Figure~\ref{fig:bosonic-trade-offs}(a), we plot the trade-off between classical and quantum communication without entanglement assistance. The maximum quantum transmission rate is $\log_2(\kappa/\bar{\kappa}) = 1$ qubits per channel use, established jointly in \cite{wolf2007quantum,PLOB15} (see discussion in \cite{WQ16}). This result also follows from the results of the present paper by considering that the bound in \eqref{eq:trade-off-region-2nd-bnd} for $\lambda=1$ gives the finite-energy quantum capacity of the quantum limited amplifier channel:
\begin{equation}
g(\kappa  N_S +\bar{\kappa}) - g(\bar{\kappa}[ N_S+1]).
\end{equation}
Taking the infinite-energy limit, we recover the formula established in \cite{wolf2007quantum,PLOB15}:
\begin{equation}
\lim_{N_S\to \infty} g(\kappa  N_S +\bar{\kappa}) - g(\bar{\kappa}[ N_S+1])= \log_2(\kappa/\bar{\kappa}).
\end{equation}
 Around 200 photons per channel use is large enough to approximate this quantum capacity well for the above parameter choices. The figure indicates a remarkable improvement over a time-sharing strategy, in which the sender transmits classical information for some fraction of the channel uses and transmits quantum information for the other fraction. By using a trade-off coding strategy, lowering the quantum data rate by about 0.1 qubits per channel use allows for sending roughly three extra classical bits per channel use. However, if a time-sharing strategy is adopted, lowering the quantum data rate by the same amount gives only one additional bit per channel use. 

In Figure~\ref{fig:bosonic-trade-offs}(b), we plot the trade-off between entanglement-assisted and unassisted classical communication. Again, a trade-off coding strategy gives a dramatic improvement over time sharing. In this figure, we take the convention that positive $E$ corresponds to entanglement consumption. With mean photon number $N_S = 200$, the sender can reliably transmit a maximum of around 10.2 classical bits per channel use by consuming around 9.1 entangled bits per channel use \cite{bennett2002entanglement,giovannetti2003entanglement}. By using trade-off coding, the sender can  reduce the consumption of entanglement to around 4 entangled bits per channel use, while still being able to transmit classical data at around 9.8 bits per channel use. 

One trend we see for the quantum-limited amplifier channel is that a large amplifier gain $\kappa$ compromises its  communication ability, as shown in Figures
\ref{fig:bosonic-trade-offs}(c) and
\ref{fig:bosonic-trade-offs}(d). For the $(C,Q)$ trade-off, as $\kappa$ increases, the quantum capacity decreases for a fixed classical rate. For the $(C,E)$ trade-off, not only the maximum classical rate is reduced, but the savings of entangled bits for a constant classical rate are also diminished. This effect results from the fact that a quantum-limited amplifier channel with large~$\kappa$ generates more photons from the vacuum, and thus injects more noise into the transmitted quantum signal. Mathematically the shrinkage of the capacity region is due to the term
$g(\bar{\kappa}[\lambda N_S+1])$
appearing in all of the inequalities in \eqref{eq:trade-off-region-1st-bnd}--\eqref{eq:trade-off-region-3rd-bnd}, which increases with increasing amplifier gain.

\section{Quantum broadcast amplifier channel}
\label{sec:broadcast}
Our next result concerns the classical capacity of a quantum broadcast channel induced by a unitary dilation of the quantum amplifier channel. We consider the
single-sender, two-receiver case in which Alice simultaneously transmits classical data to Bob ($B$) via the amplifier channel and to Charlie ($C$) via its complementary channel. The full Bogoliubov transformation for this setup is given by
\begin{align}\label{eq:amplifier-broadcast}
\nonumber
\hat{b} = \sqrt{\kappa}\hat{a} + \sqrt{\kappa-1}\hat{e}^\dagger ~,  \\
\hat{c}^\dagger = \sqrt{\kappa-1}\hat{a} + \sqrt{\kappa}\hat{e}^\dagger ~, 
\end{align}
where $\hat{a}$, $\hat{b}$, $\hat{c}$, and $\hat{e}$ are the field-mode annihilation operators corresponding to the sender Alice's input mode, the receiver Bob's output mode, the receiver Charlie's output mode, and an environmental input, respectively.
Here we consider a general amplifier channel with thermal noise, in which the input state represented by $\hat{e}$ is a thermal state with mean photon number $N_B$.
Such a channel could model information propagation
to two observers, one outside and one beyond the event horizon of a black hole \cite{bradler2015black}.
This channel could also model information propagation from an inertial observer to two constantly accelerated complementary observers moving with opposite accelerations in two causally disconnected regions of Rindler spacetime, if we take the convention that the inertial observer can encode information into Unruh modes, which arguably allows for computing estimates for an upper bound of channel capacities between inertial and relativistically accelerating observers \cite{PhysRevA.86.062307}.

 The classical capacity region of the two-user degraded quantum broadcast channel was derived in \cite{yard2011quantum} (see also \cite{SW13} for the achievability part) and found to be equal to the regularization of the union of the following rate regions:
 \begin{align}
 R_B &\leq \sum_x p_X(x)\big[H(\N(\rho_{x}))-\sum_y p_{Y|X}(y|x)H(\N(\rho_y)\big)]~,\nonumber\\ 
 R_C & \leq H({\N^c}(\rho))-\sum_xp_X(x)H(\N^c(\rho_{x}))~,
 \end{align}
 where the union is with respect to input ensembles $\brac{p_X(x)p_{Y|X}(y|x),\rho_{y}}$ with
 \begin{align}
 \rho_{x} & \equiv \sum_y p_{Y|X}(y|x) \rho_y ~,\\
 \rho & \equiv \sum_x p_X(x) \rho_x~.
 \end{align}
 
In the following we first give an achievable rate region for an amplifier channel with thermal noise. We then prove that this rate region is optimal if the multi-mode version of Theorem~\ref{th:min-out-ent-conj} is true. For the case in which the amplifier channel is quantum-limited, the capacity region is single-letter \cite{wang2016capacities} and therefore Theorem~\ref{th:min-out-ent-conj} implies the broadcast capacity region for the quantum-limited amplifier channel.

\subsection{Achievable rate region by coherent-state encoding}
\begin{theorem}
	Consider a quantum broadcast amplifier channel
	as given in \eqref{eq:amplifier-broadcast} with amplifier gain $\kappa \geq 1$ and environmental thermal-state input with mean photon number $N_B$. Suppose that the mean input photon number for each channel use is no larger than $N_S$. Then the following rate region for Bob and Charlie
	\begin{align}
	R_B &\leq g(\kappa\lambda N_S +\bar{\kappa}(N_B+1)) - g(\bar{\kappa}(N_B+1))~,\label{eq:amplify-rate-Bob}\\
	R_C &\leq g(\bar{\kappa}(N_S+1)+\kappa N_B) - g(\bar{\kappa}(\lambda N_S+1)+\kappa N_B)~,
	\label{eq:amplify-rate-Charlie}
	\end{align}	
	with $\lambda \in [0,1]$ is achievable by using coherent-state encoding
	according to the following ensemble:
	\begin{align}\label{eq:input-distribution}
	\brac{p(t) p(\alpha|t), \ket{\alpha}\bra{\alpha}}~,
	\end{align}
	where
	\begin{align}
	p(t) &= \frac{1}{\pi N_S}\exp\!\left(-\frac{|t|^2}{N_S}\right)~,\label{eq:broad-dist-1}\\
	p(\alpha|t) &= \frac{1}{\pi \lambda N_S}\exp\!\left(-\frac{|\sqrt{1-\lambda}t-\alpha|^2}{\lambda N_S}\right)~.\label{eq:broad-dist-2}
	\end{align}
	Here $\alpha$ and $t$ are complex variables and $\bar{\lambda} = 1-\lambda$.
\end{theorem}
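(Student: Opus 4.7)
My plan is to apply the broadcast achievability formula recalled immediately before the theorem with the outer cloud-center variable identified as $t$ and the inner cloud variable identified as $\alpha$, so that the channel input states are $\rho_y = \ket{\alpha}\bra{\alpha}$ and the cloud centers are $\rho_x = \rho_t$. Once the Gaussian ensemble \eqref{eq:broad-dist-1}--\eqref{eq:broad-dist-2} is plugged in, the proof reduces to a handful of phase-space calculations: identify the conditional and marginal output states of $\N$ and $\N^c$, and then evaluate their von Neumann entropies.

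The first step is to recognize, via the standard Gaussian $P$-representation, that the conditional mixture
\begin{equation*}
\rho_t \equiv \int d^2\alpha\, p(\alpha|t)\, \ket{\alpha}\bra{\alpha}
\end{equation*}
is a thermal state of mean photon number $\lambda N_S$ displaced by $\sqrt{\bar{\lambda}}\,t$, and that marginalizing over $t$ (writing $\alpha = \sqrt{\bar{\lambda}}\,t + \eta$ with $\eta$ an independent complex Gaussian of variance $\lambda N_S$) makes $\alpha$ complex Gaussian with variance $N_S$. Hence $\rho$ is the thermal state of mean photon number $N_S$, which simultaneously verifies that the input mean-photon-number constraint is saturated.

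The second step is to evaluate the action of $\N$ and $\N^c$ via phase-space covariance, read directly from the Bogoliubov transformation \eqref{eq:amplifier-broadcast}: a thermal input of mean photon number $M$, together with the environmental thermal noise $N_B$, produces a thermal output of mean photon number $\kappa M + \bar{\kappa}(N_B+1)$ at Bob and $\bar{\kappa}(M+1)+\kappa N_B$ at Charlie, while coherent amplitudes pass through as $\sqrt{\kappa}\,\gamma$ and $\sqrt{\bar{\kappa}}\,\gamma^{\ast}$ respectively. Combined with displacement invariance of the von Neumann entropy and the identity that a thermal state of mean photon number $M$ has entropy $g(M)$, this yields
\begin{align*}
H(\N(\ket{\alpha}\bra{\alpha})) &= g(\bar{\kappa}(N_B+1)),\\
H(\N(\rho_t)) &= g(\kappa\lambda N_S + \bar{\kappa}(N_B+1)),\\
H(\N^c(\rho_t)) &= g(\bar{\kappa}(\lambda N_S+1) + \kappa N_B),\\
H(\N^c(\rho)) &= g(\bar{\kappa}(N_S+1) + \kappa N_B),
\end{align*}
all of which are independent of $t$ and $\alpha$.

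Substituting these expressions into the two broadcast inequalities collapses the $t$- and $\alpha$-averages to single $g$-differences, producing exactly the bounds \eqref{eq:amplify-rate-Bob}--\eqref{eq:amplify-rate-Charlie}. I do not anticipate a serious obstacle since the superposition-coding rate formula is already in hand and everything else is phase-space bookkeeping; the main place where care is required is the complementary channel, whose Bogoliubov relation involves $\hat{c}^{\dagger}$ rather than $\hat{c}$ and therefore complex-conjugates the coherent amplitude, but this affects only the displacement label and not the thermal parameter, so it does not alter any of the entropies above.
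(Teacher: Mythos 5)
Your proposal is correct and follows essentially the same route as the paper's proof: identify $\rho_t$ as a displaced thermal state of mean photon number $\lambda N_S$ and $\rho$ as the thermal state of mean photon number $N_S$, use covariance (and contravariance for $\N^c$) with respect to displacements together with unitary invariance of entropy to reduce every term to $g$ of the appropriate thermal photon number, and substitute into the degraded-broadcast rate formula. The remark that the conjugation in the $\hat{c}^{\dagger}$ relation only alters the displacement label, not the thermal parameter, is exactly the point the paper invokes as well.
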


\begin{proof}
	 Using \eqref{eq:broad-dist-1} and \eqref{eq:broad-dist-2}, we find that
	 \begin{align}
	 \nonumber
	 \rho_t &= \int d^2\alpha \ p(\alpha|t)\ \ket{\alpha}\bra{\alpha}\\
	 \nonumber
	 &= \int d^2\gamma\
	 \frac{1}{\pi \lambda N_S}\
	 \exp\!\left(-\frac{|\gamma|^2}{N_S\lambda}\right)\
	 \ket{\gamma +\sqrt{\bar{\lambda}}t}\bra{\gamma +\sqrt{\bar{\lambda}}t}\\
	 &= D(\sqrt{\bar{\lambda}}t)\rho^{\operatorname{th}}_{\lambda N_S}D^\dagger(\sqrt{\bar{\lambda}}t)~.
	 \end{align}
	 	 In the above, $D(\alpha)$ is a displacement operator \cite{GK04} and $\rho^{\operatorname{th}}_{\lambda N_S}$ denotes a thermal state of mean photon number $\lambda N_S$.
	 The overall average input state is
	 \begin{align}
	 \nonumber
	 \rho &= \int d^2t\ p(t)\ \rho_t~,\\
	 \nonumber
	 &= \int d^2t'\ \frac{1}{\pi\bar{\lambda}N_S}
	 \
	 \exp\!\left(-\frac{|t'|^2}{\bar{\lambda} N_S}\right)D(t')\rho^{\operatorname{th}}_{\lambda N_S}D^\dagger(t')\\
	 &= \rho^{\operatorname{th}}_{N_S}~,\label{eq:input-thermal}
	 \end{align}
	 which is just a thermal state with mean photon number~$N_S$, in agreement with the energy constraint. There are four entropies we need to evaluate in \eqref{eq:amplify-rate-Bob} and \eqref{eq:amplify-rate-Charlie}. The first one is
	 \begin{align}
	 \nonumber
	 &\int d^2t \ p(t) \ H(\N(\rho_t))\\\nonumber
	 &= \int d^2t \ p(t) \ H(\N (D(\sqrt{\bar{\lambda}}t)\rho^{\operatorname{th}}_{\lambda N_S}D^\dagger(\sqrt{\bar{\lambda}}t)))\\
	 \nonumber
	 &=\int d^2t \ p(t) \ H(\N(\rho^{\operatorname{th}}_{\lambda N_S}))
	 = H(\N(\rho^{\operatorname{th}}_{\lambda N_S}))\\
	 &= g(\kappa\lambda N_S +(\kappa-1)(N_B+1))~.
	 \end{align}
	 The second equality follows because the amplifier channel is covariant with respect to displacement operators and the fact that entropy is invariant with respect to a unitary transformation. 
	 
	 Since the output state is unitarily related to a thermal state with mean photon number $(\kappa-1)(N_B+1)$ when Alice sends a coherent state into an amplifier channel,
	 the second term in \eqref{eq:amplify-rate-Bob} is given by
	 \begin{multline}
	 \int d^2t \ d^2\alpha\  p(t)\ p(\alpha|t)\ H(\N(\ket{\alpha}\bra{\alpha}))
	 \\=g((\kappa-1)(N_B+1)) ~.
	 \end{multline}
	 Now similarly for (\ref{eq:amplify-rate-Charlie}), the first term is 
	 \begin{align}
	 H(\N^c(\rho^{\operatorname{th}}_{N_S})) = g((\kappa-1)(N_S+1)+\kappa N_B)~.
	 \end{align}
	 The last term can be calculated as follows:
	 \begin{align}\nonumber
	 &\int d^2t \ p(t)\ H(\N^c(\rho_t))\\\nonumber
	  &= \int d^2t \ p(t)\   H(\N^c (D(\sqrt{\bar{\lambda}}t)\rho^{\operatorname{th}}_{\lambda N_S}D^\dagger(\sqrt{\bar{\lambda}}t)))\\
	 \nonumber
	 &=\int  d^2t\ p(t)\  H(\N^c(\rho^{\operatorname{th}}_{\lambda N_S}))\\
	 &= g((\kappa-1)(\lambda N_S+1)+\kappa N_B)~.
	 \end{align}
	 We use the facts that a gauge-contravariant bosonic Gaussian channel is contravariant with respect to displacement operators and that entropy is invariant with respect to a  unitary transformation. Combining the above results, we conclude that the rate region in \eqref{eq:amplify-rate-Bob} and \eqref{eq:amplify-rate-Charlie} is achievable.
\end{proof}

\subsection{Outer bound for the capacity region}

We first prove that the rate region
in \eqref{eq:amplify-rate-Bob} and
\eqref{eq:amplify-rate-Charlie} is optimal if a multi-mode version of Theorem~\ref{th:min-out-ent-conj} is true. To do so we need to show that it is also an outer bound for the capacity region. 
\begin{theorem}
	Consider a quantum amplifier channel with amplifier gain $\kappa \geq 1$ and environmental thermal-state input with mean photon number $N_B$. Suppose that the mean input photon number for each channel use is no larger than $N_S$. Suppose that a multi-mode version of Theorem~\ref{th:min-out-ent-conj} is true. Then the region given by \eqref{eq:amplify-rate-Bob} and \eqref{eq:amplify-rate-Charlie} is an outer bound for the broadcast capacity region.
\end{theorem}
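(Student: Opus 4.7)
The plan is to adapt the converse strategy used for Theorem~\ref{thm:main-result} to the broadcast setting. For any input ensemble $\{p_X(x)p_{Y|X}(y|x),\rho_y\}$ obeying the photon-number constraint $N_S$, one must exhibit a single $\lambda\in[0,1]$ for which both (\ref{eq:amplify-rate-Bob}) and (\ref{eq:amplify-rate-Charlie}) hold simultaneously.

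I would start by establishing the two ``extremal'' inequalities. Since the average input $\rho$ has mean photon number at most $N_S$ and the thermal state maximizes entropy at fixed photon number, one has $H(\N^c(\rho))\leq g(\bar{\kappa}(N_S+1)+\kappa N_B)$, which gives the first term of Charlie's bound. Likewise, the vacuum state realizes the minimum output entropy of any phase-insensitive bosonic Gaussian channel, so pointwise $H(\N(\rho_y))\geq g(\bar{\kappa}(N_B+1))$, which gives the second term of Bob's bound. Next I would fix $\lambda$: concavity of entropy together with the maximum-entropy property sandwiches $\sum_x p_X(x)H(\N(\rho_x))$ between $g(\bar{\kappa}(N_B+1))$ and $g(\kappa N_S+\bar{\kappa}(N_B+1))$, so by continuity of $g$ there exists $\lambda\in[0,1]$ with $\sum_x p_X(x)H(\N(\rho_x))=g(\kappa\lambda N_S+\bar{\kappa}(N_B+1))$. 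Combined with the pointwise vacuum bound, this already yields (\ref{eq:amplify-rate-Bob}).

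The main obstacle is showing that the \emph{same} $\lambda$ controls Charlie's conditional-entropy term, i.e.\ that $\sum_x p_X(x)H(\N^c(\rho_x))\geq g(\bar{\kappa}(\lambda N_S+1)+\kappa N_B)$. Mirroring the proof of Theorem~\ref{thm:main-result}, I would introduce letter-wise parameters $\lambda_x\in[0,1]$ via $H(\N(\rho_x))=g(\kappa\lambda_x N_{S,x}+\bar{\kappa}(N_B+1))$, where $N_{S,x}$ is the mean photon number of $\rho_x$. The pointwise goal is a lower bound of the form $H(\N^c(\rho_x))\geq g(\bar{\kappa}(\lambda_x N_{S,x}+1)+\kappa N_B)$; when $N_B=0$ this reduces to the degradability calculation of Theorem~\ref{thm:main-result}, with degrading channel the weakly-conjugate quantum-limited amplifier of gain $(2\kappa-1)/\kappa$, while for $N_B>0$ the assumed multi-mode extension of Theorem~\ref{th:min-out-ent-conj} must be applied directly to Charlie's channel, possibly after decomposing the thermal amplifier into a series concatenation of quantum-limited amplifiers and attenuators. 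Averaging over $x$ and applying Theorem~\ref{thm:guha-A-3} of Appendix~\ref{sec:property}, with shift and scale constants adjusted to include the $\kappa N_B$ offset, together with the defining identity $\sum_x p_X(x)g(\kappa\lambda_x N_{S,x}+\bar{\kappa}(N_B+1))=g(\kappa\lambda N_S+\bar{\kappa}(N_B+1))$, would then produce $\sum_x p_X(x)H(\N^c(\rho_x))\geq g(\bar{\kappa}(\lambda N_S+1)+\kappa N_B)$, which is (\ref{eq:amplify-rate-Charlie}). The delicate part is precisely this last step: the appendix lemmas were tuned to the quantum-limited converse, so one must verify that the algebra collapses correctly onto $\bar{\kappa}(\lambda N_S+1)+\kappa N_B$ after incorporating $N_B$, and the degradability-style pointwise reduction must be replaced by a genuine multi-mode entropy photon-number inequality whenever $N_B>0$.
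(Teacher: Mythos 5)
Your skeleton matches the paper for the easy pieces: $H((\N^c)^{\otimes n}(\rho))$ is bounded by subadditivity plus maximum entropy of the thermal state, the vacuum minimum-output-entropy result gives the $g(\bar{\kappa}(N_B+1))$ term, and $\lambda$ is fixed by sandwiching $\sum_x p_X(x)H(\N(\rho_x))$ and invoking monotonicity of $g$. But two things are missing or go wrong. First, your argument is written at the single-letter level. Because the thermal-noise broadcast amplifier channel is not Hadamard, the capacity region of Yard \emph{et al.} is regularized, and the converse must bound the $n$-letter quantities $H(\N^{\otimes n}(\rho_x))$, $H((\N^c)^{\otimes n}(\rho_x))$, etc.; this is exactly why the theorem hypothesizes a \emph{multi-mode} version of Theorem~\ref{th:min-out-ent-conj}. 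The paper runs every step on $n$ channel uses, using subadditivity of entropy and concavity of $g$ to define the per-codeword photon number $N_{S,x}=\sum_j N_{S,x_j}/n$ and the parameters $\lambda_x$ via $H(\N^{\otimes n}(\rho_x))=ng(\kappa\lambda_x N_{S,x}+\bar{\kappa}(N_B+1))$. A single-letter argument only constrains the unregularized region.

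The more serious gap is your proposed treatment of Charlie's conditional term when $N_B>0$. Applying the minimum output-entropy theorem ``directly to Charlie's channel'' constrains its output entropy by the \emph{input} entropy $H(\rho_x)$, i.e.\ by a parameter $\lambda'_x$ defined through $H(\rho_x)=g(\lambda'_x N_{S,x})$, not by the $\lambda_x$ you defined through Bob's output entropy. Averaging would then give $\sum_x p_X(x)H(\N^c(\rho_x))\geq g(\bar{\kappa}(\lambda' N_S+1)+\kappa N_B)$ with $\lambda'$ set by the input entropies; since in general $\lambda'\leq\lambda$ (this is precisely the $\lambda'\leq\lambda$ step in the proof of Theorem~\ref{thm:main-result}), that lower bound is \emph{weaker} than \eqref{eq:amplify-rate-Charlie} with the $\lambda$ already committed to in \eqref{eq:amplify-rate-Bob}, so the region does not close. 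The resolution — and what the paper does — is to keep the degradability route for all $N_B\geq 0$, not only for $N_B=0$: the degrading map $\D_{B\to C}$ is the weak-conjugate amplifier with gain $\kappa'=(2\kappa-1)/\kappa$ whose environment carries the same thermal occupation $N_B$, still a single-mode phase-insensitive Gaussian channel covered by the assumed multi-mode statement. Applying that assumption to $\D^{\otimes n}$ with input entropy $ng(\kappa\lambda_x N_{S,x}+\bar{\kappa}(N_B+1))$ gives $H((\N^c)^{\otimes n}(\rho_x))\geq ng\big((\kappa'-1)[\kappa\lambda_x N_{S,x}+\bar{\kappa}(N_B+1)+1]+\kappa' N_B\big)=ng\big(\bar{\kappa}(\lambda_x N_{S,x}+1)+\kappa N_B\big)$, and then Theorem~\ref{thm:guha-A-3} with $q=(\kappa-1)/\kappa$ and $C=\tfrac{2\kappa-1}{\kappa}(N_B+1)-1$ delivers \eqref{eq:amplify-rate-Charlie} with the same $\lambda$. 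No decomposition of the thermal amplifier into quantum-limited pieces is needed, and Theorem~\ref{thm:second-prop-gx} plays no role here.
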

\begin{proof}
Since a general quantum amplifier channel with thermal noise is not a Hadamard channel, we need to consider the $n$-letter version of \eqref{eq:amplify-rate-Bob} and \eqref{eq:amplify-rate-Charlie}. Specifically, we need to prove
that for all input ensembles
$\brac{p_X(x)p_{Y|X}(y|x),\rho_{y}}$ for $n$ uses of the channel, there exists $\lambda \in [0,1]$ such that 
the following four bounds hold
\begin{align}
\sum_x p_X(x)H(\N^{\otimes n}(\rho_x)) \leq n g(\kappa\lambda N_S +\bar{\kappa}(N_B+1)) ~,\\
H((\N^c)^{\otimes n}(\rho)) \leq ng(\bar{\kappa}(N_S+1) +\kappa N_B)~,\\
\sum_x\sum_y p_X(x)p_{Y|X}(y|x)H(\N^{\otimes n}(\rho_y)) \geq ng(\bar{\kappa}(N_B+1))~,\\
\sum_xp_X(x)H((\N^c)^{\otimes n}(\rho_x)) \geq ng(\bar{\kappa}(\lambda N_S+1)+\kappa N_B) ~.
\end{align}

The second inequality holds because 
\begin{align}
H((\N^c)^{\otimes n}(\rho)) & \leq \sum_{j=1}^n H(\rho^j_C)
\\
& \leq ng((\kappa-1)(N_S+1) +\kappa N_B)~.
\end{align}
The first inequality follows from the subadditivity of quantum entropy. The second inequality follows from the fact that each output state at $C$ has mean photon number $(\kappa-1)(N_S+1) +\kappa N_B$ and the thermal state maximizes the entropy.

Since the vacuum minimizes the output entropy for any phase-insensitive Gaussian channel \cite{giovannetti2014ultimate}, we find that $H(\N^{\otimes n}(\rho_y))\geq ng((\kappa-1)(N_B+1))$, which leads to the third bound:
\begin{multline}
\sum_x\sum_y p_X(x)p_{Y|X}(y|x)H(\N^{\otimes n}(\rho_y)) \\ \geq ng((\kappa-1)(N_B+1))~.
\end{multline}
Now we prove the first bound. From the concavity of quantum entropy, we have that
\begin{multline}\nonumber
H\!\left(\sum_y p_{Y|X}(y|x)\N^{\otimes n}(\rho_y)\right)\\ \geq \sum_y p_{Y|X}(y|x) H(\N^{\otimes n}(\rho_y))~.
\end{multline}
Thus we have
\begin{align}\nonumber
&\sum_x p_X(x)H(\N^{\otimes n}(\rho_x)) \\\nonumber
&\geq\sum_{x,y}p_X(x)p_{Y|X}(y|x) H(\N^{\otimes n}(\rho_y))\\
&\geq ng((\kappa-1)(N_B+1))~.\label{eq:lowerbound}
\end{align}
On the other hand, we have that
\begin{align}\nonumber
\sum_x p_X(x)H(\N^{\otimes n}(\rho_x)) &\leq H(\N^{\otimes n}(\rho))\\
&\leq ng(\kappa N_S +(\kappa-1)(N_B+1))~.
\end{align}
Together with \eqref{eq:lowerbound} and the fact that $g(x)$ is monotonic, there exists $\lambda \in [0,1]$ such that
\begin{align}\nonumber
\sum_x p_X(x)H(\N^{\otimes n}(\rho_x)) = n g(\kappa\lambda N_S +(\kappa-1)(N_B+1))~.
\end{align}
To prove the last bound, we use the fact that the weakly degrading channel of the amplifier channel is the weakly-conjugate of an amplifier channel with $\kappa' = (2\kappa-1)/\kappa >1$ \cite{caruso2006degradability}. We first calculate the entropy of the output state:
\begin{align}
\nonumber
H(\N^{\otimes n}(\rho_x)) &= H(\rho_{B,x})~, \\
\nonumber
&\leq \sum_{j=1}^nH(\rho^j_{B,x})~,\\
\nonumber
&\leq n\sum_{j=1}^n\frac{1}{n}g(\kappa N_{S,x_j}+(\kappa-1)(N_B+1))~,\\
&\leq ng(\kappa N_{S,x} +(\kappa-1)(N_B+1))~.
\end{align}
The first inequality follows from subadditivity of quantum entropy. Letting $N_{S,x_j}$ be the mean photon number for the $j$th symbol of $\rho_x$, the second inequality follows because the thermal state maximizes the entropy. Letting $N_{S,x} = \sum_j N_{S,x_j}/n$, the last inequality follows from concavity of $g(x)$. Since we also have that
\begin{equation}
H(\N^{\otimes n}(\rho_x))\geq ng((\kappa-1)(N_B+1)),
\end{equation}
there exists $\lambda_x\in [0,1]$ such that
\begin{align}
H(\N^{\otimes n}(\rho_x)) = ng(\kappa\lambda_xN_{S,x} +(\kappa-1)(N_B+1))~.
\end{align}
Using the multi-mode version of Theorem~1 for the degrading channel, we find that
\begin{align}
\nonumber
&\sum_xp_X(x)H((\N^c)^{\otimes n}(\rho_x)) \\\nonumber
\geq& \sum_xp_X(x)ng((\kappa'-1)[\kappa\lambda_xN_{S,x} +\bar{\kappa}(N_B+1)+1] +\kappa'N_B)\\
=&\sum_xp_X(x)ng((\kappa-1)(\lambda_xN_{S,x}+1) +\kappa N_B)~.
\end{align}
Together with 
\begin{multline}
\sum_xp_X(x)g(\kappa\lambda_x N_{S,x}+\bar{\kappa}(N_B+1)) \\ 
= g(\kappa\lambda N_S+\bar{\kappa}(N_B+1))~,
\end{multline}
we can invoke Theorem~\ref{thm:guha-A-3} in
Appendix~\ref{sec:property} with $q = (\kappa-1)/\kappa$ and $C=\frac{2\kappa-1}{\kappa}(N_B+1)-1$ to find that
\begin{multline}
\sum_xp_X(x)H((\N^c)^{\otimes n}(\rho_x)) \\
\geq ng((\kappa-1)(\lambda N_S+1)+\kappa N_B)~.
\end{multline}
This concludes our proof. Together with the achievability of \eqref{eq:amplify-rate-Bob}--\eqref{eq:amplify-rate-Charlie}, we establish it as the  capacity region for the quantum broadcast amplifier channel, provided that the multi-mode version of Theorem~1 is true.
\end{proof}

\bigskip
Now let us consider the quantum-limited amplifier channel. Since the broadcast capacity region for Hadamard channels is single-letter \cite{wang2016capacities}, by setting $n=1$ and $N_B=0$ in the above proof, we 
establish the following:
\begin{corollary}
For a quantum-limited amplifier broadcast channel,
\eqref{eq:amplify-rate-Bob}--\eqref{eq:amplify-rate-Charlie} with $N_B=0$ is equal to the capacity region.
\end{corollary}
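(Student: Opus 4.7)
The plan is to specialize the converse proof of the preceding theorem to the quantum-limited case ($N_B = 0$), where the missing ingredient---a multi-mode minimum output-entropy statement---is no longer needed. The key structural fact is that a quantum-limited amplifier channel is a Hadamard channel (as noted earlier in the paper), so by \cite{wang2016capacities} its classical broadcast capacity region is single-letter. Thus it suffices to prove the bounds \eqref{eq:amplify-rate-Bob}--\eqref{eq:amplify-rate-Charlie} at $N_B = 0$ for $n = 1$, in which case the only invocation of the conjectural multi-mode extension of Theorem~\ref{th:min-out-ent-conj} in the preceding proof collapses to the already-established single-mode statement.

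For the achievability direction I would simply substitute $N_B = 0$ into the rate region achieved by the coherent-state ensemble \eqref{eq:input-distribution}--\eqref{eq:broad-dist-2}, so that the four entropy evaluations reduce to $g(\kappa \lambda N_S + \bar{\kappa})$, $g(\bar{\kappa})$, $g(\bar{\kappa}(N_S + 1))$, and $g(\bar{\kappa}(\lambda N_S + 1))$, matching \eqref{eq:amplify-rate-Bob}--\eqref{eq:amplify-rate-Charlie} at $N_B = 0$. For the converse I would mirror the four-step structure of the preceding proof at $n = 1$: the subadditivity/concavity upper bounds and the lower bound coming from the vacuum minimizing the output entropy of a phase-insensitive Gaussian channel \cite{giovannetti2014ultimate} all go through unchanged at a single channel use, and produce a parameter $\lambda \in [0,1]$ with $\sum_x p_X(x) H(\mathcal{N}(\rho_x)) = g(\kappa \lambda N_S + \bar{\kappa})$. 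The final lower bound on $\sum_x p_X(x) H(\mathcal{N}^c(\rho_x))$ proceeds exactly as in the proof of Theorem~\ref{thm:main-result}: write $\mathcal{N}^c$ as the weakly conjugate amplifier with $\kappa' = (2\kappa - 1)/\kappa$ acting as a degrading channel after $\mathcal{N}$, apply the single-mode Theorem~\ref{th:min-out-ent-conj} to this degrading channel, and then invoke Theorem~\ref{thm:guha-A-3} with $q = (\kappa - 1)/\kappa$ and $C = (\kappa - 1)/\kappa$.

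The main obstacle is essentially bookkeeping: one must verify that after the $N_B = 0$ substitutions the parameters fed into Theorem~\ref{thm:guha-A-3} still produce the target quantity $g(\bar{\kappa}(\lambda N_S + 1))$ appearing in \eqref{eq:amplify-rate-Charlie}. Since this algebraic check is the same one already performed inside the proof of Theorem~\ref{thm:main-result} for the point-to-point trade-off region, I do not expect any genuinely new difficulty here; the corollary is really a direct consequence of the Hadamard (hence single-letter) property together with the already-proven single-mode Theorem~\ref{th:min-out-ent-conj} and the entropy-calculus lemmas from Appendix~\ref{sec:property}.
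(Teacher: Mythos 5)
Your proposal is correct and follows essentially the same route as the paper: the paper's own argument is precisely to invoke the single-letter property of the broadcast capacity region for Hadamard channels \cite{wang2016capacities} and then set $n=1$ and $N_B=0$ in the preceding outer-bound proof, at which point only the proven single-mode Theorem~\ref{th:min-out-ent-conj} is needed (and indeed $C=\frac{2\kappa-1}{\kappa}(N_B+1)-1$ reduces to $(\kappa-1)/\kappa$ as you state), with achievability given by the coherent-state ensemble at $N_B=0$. No gaps; your write-up just makes explicit the bookkeeping the paper leaves implicit.
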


\subsection{Coherent-detection and large $\kappa$ limit}
\begin{figure}
	\centering
	\includegraphics[width=\columnwidth]{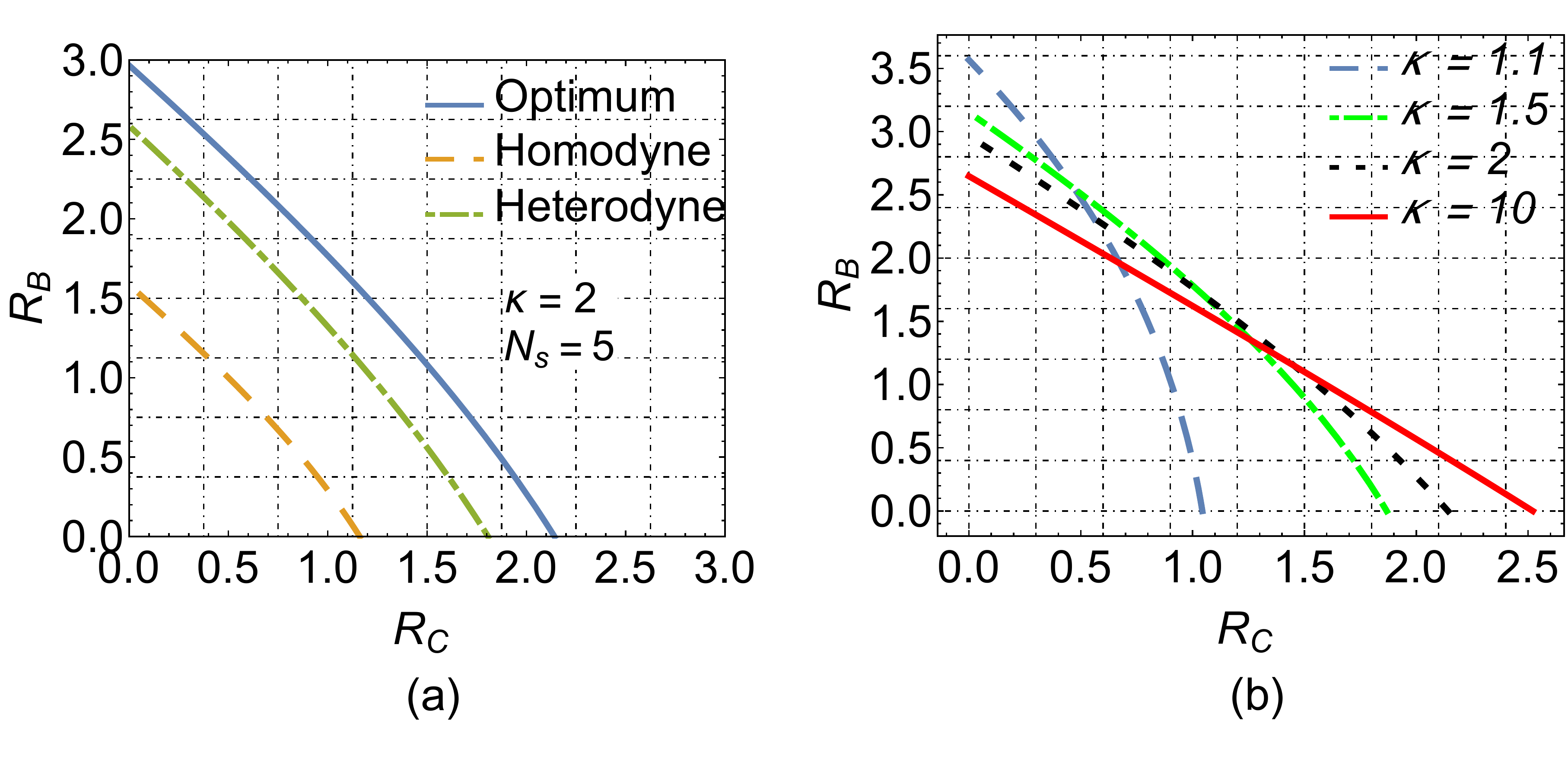}
	\caption{In (a) we consider a quantum-limited broadcast amplifier channel with
$N_S=5$ and $N_B=0$. We compare the capacity region obtained by homodyne detection (\eqref{eq:broadcast-coherent-detection} with $\xi=1/2$), heterodyne detection (\eqref{eq:broadcast-coherent-detection} with $\xi=1$) and the optimal measurement (\eqref{eq:amplify-rate-Bob}--\eqref{eq:amplify-rate-Charlie} with $N_B=0$).  In (b) we plot the large $\kappa$ limit of the rate region. At $\kappa=10$, it is indistinguishable with the limit in \eqref{eq:broadcast-region-limit}.
	\label{fig:broadcast-trade-offs}}
\end{figure}

To evaluate the performance of the capacity region given by 
\eqref{eq:amplify-rate-Bob} and \eqref{eq:amplify-rate-Charlie} with $N_B=0$, we compare it with what can be achieved by conventional, coherent-detection 
strategies \cite{yuen1980optical,caves1994quantum}.
When Alice inputs a coherent state $\ket{\alpha}$, Bob receives a displaced thermal state $D(\sqrt{\kappa}\alpha)
\rho^{\operatorname{th}}_{\bar{\kappa}}
D^\dagger(\sqrt{\kappa}\alpha)$, where $D(\sqrt{\kappa}\alpha)$ denotes a displacement operator and $\rho^{\operatorname{th}}_{\bar{\kappa}}$ the density operator corresponding to a thermal state with mean photon number $\bar{\kappa}$ \cite{GK04}. When Bob employs homodyne or heterodyne detection \cite{GK04}, his measurement outcomes have particular Gaussian distributions, and similarly for Charlie. The quantum broadcast channel then reduces to a classical Gaussian channel with additive noise \cite{cover2012elements}. Using known results for  classical Gaussian broadcast channels
\cite{cover2012elements,caves1994quantum,el2010lecture}, we find that coherent-detection strategies lead to the following capacity regions:
\begin{equation}
\label{eq:broadcast-coherent-detection}
\begin{aligned}
R_B &\leq \xi\log_2\!\left( 1+
\frac{\lambda\kappa N_S}{\xi(\xi +\bar{\kappa})}\right)~,\\
R_C &\leq \xi\log_2\!\left( 1+
\frac{(1-\lambda)\bar{\kappa} N_S}{
\xi(\xi +\bar{\kappa})+\lambda \bar{\kappa} N_S}
\right)~,
\end{aligned}
\end{equation}
where $\xi=1/2$  for homodyne
detection and $\xi = 1$ for
heterodyne detection. See Appendix \ref{sec:coherent-detection} for a detailed derivation.

In Figure~\ref{fig:broadcast-trade-offs}(a), we compare these strategies with the optimal strategy for a quantum-limited amplifier with $\kappa=2$ and  $N_S=5$. As we can see, the capacity region we find in \eqref{eq:amplify-rate-Bob} and \eqref{eq:amplify-rate-Charlie} outperforms both coherent detection schemes. For relatively high mean photon number, heterodyne detection outperforms homodyne detection as expected from prior results \cite{guha2007classical}. 

Notice that in the first equation of (\ref{eq:broadcast-coherent-detection}), the amplifier gain $\kappa$ happens to cancel out in the case of heterodyne detection ($\xi=1$). This indicates that amplifying will both boost and hurt the transmission rate, so that there should exist a `balanced point'. Actually, if we consider the large $\kappa$ limit, \eqref{eq:amplify-rate-Bob} and \eqref{eq:amplify-rate-Charlie}
reduce to a gain-independent linear trade-off:
\begin{align}
\label{eq:broadcast-region-limit}
R_B +R_C \leq \log_2(N_S/[N_B+1]+1)~.
\end{align}
Physically, although a large amplifier gain will amplify the input energy power and thus potentially increase the capacity, it is balanced out by the increasing noise generated from amplifying the vacuum, manifested by the negative terms in \eqref{eq:amplify-rate-Bob} and \eqref{eq:amplify-rate-Charlie}. With mean photon numbers $N_S=5$
and $N_B = 0$, the maximum classical capacity of Bob and Charlie converges to around $\log_2(6) \approx 2.58$ bits per channel use.
In Figure~\ref{fig:broadcast-trade-offs}(b),
we plot the rate region for amplifier gain $\kappa$
increasing 
from $1.1$ to $10$. The capacity region converges to \eqref{eq:broadcast-region-limit} very quickly. The maximum capacities for both receivers approach around 2.6 bits per channel use, as expected from the reasoning above. 

\section{Trading public and private resources}
\label{sec:public-private}

Here we briefly argue that we obtain the private dynamic capacity region \cite{WH10} of quantum-limited amplifier channels.
The techniques for establishing this result are similar to those from previous sections, so we merely state the result rather than going through all the details.

The information-theoretic task is similar to the triple trade-off discussed previously, but the resources involved are different.
Here we are concerned with the transmission (or consumption) of public classical bits, private classical bits, and secret key along with the consumption of many independent uses of a quantum-limited amplifier channel. The communication  trade-off is characterized by \textit{rate triples} $(R,P,S)$, where $R$ is the net rate of public classical communication, $P$ is the net rate of private classical communication, and $S$ is the net rate of secret key generation. 

Since the quantum-limited amplifier channel is a Hadamard channel, the private dynamic capacity region
of a quantum channel $\N$
is given by the union of regions of the following form \cite{WH10}:
\begin{align}
R+P    & \leq H(  \mathcal{N}(  \rho)  )
-\sum_{x,y}p_X(  x)p_{Y|X}(y|x)    H(  \N(\psi
_{x,y}))  ,\nonumber\\
P+S    & \leq\sum_{x}p_X(  x)  \left[  H(  \mathcal{N}(
\rho_{x})  )  -H(  \mathcal{N}^{c}(  \rho_{x})
)  \right]  ,\nonumber\\
R+P+S    & \leq H(  \mathcal{N}(  \rho)  )  -\sum
_{x}p_X(  x)  H(  \mathcal{N}^{c}(  \rho_{x})
),
\end{align}
where the union is with respect to all possible
pure-state input ensembles
$\brac{ p_X(x)p_{Y|X}(y|x), \psi_{x,y}}$,
\begin{align}
\rho_x & \equiv \sum_y p_{Y|X}(y|x) \psi_{x,y},\\
\rho & \equiv \sum_x p_X(x) \rho_x,
\end{align}
and $\N^c$ is a complementary channel of $\N$. To give an upper bound on the single-letter
private dynamic
capacity region of the quantum-limited amplifier channel, we need to show that for all input ensembles
$\brac{ p_X(x)p_{Y|X}(y|x), \psi_{x,y}}$, there exists a $\lambda \in [0,1]$ such that 
the following four inequalities hold
\begin{align}
H(\N(\rho))\leq g(\kappa N_S +\bar{\kappa})
~,\\
\sum_x\sum_y p_X(x)p_{Y|X}(y|x)H(\N(\psi_{x,y})) \geq g(\bar{\kappa})~,\\
\sum_x p_X(x)H(\N(\rho_x)) \leq
g(\kappa\lambda N_S +\bar{\kappa})~,\\
\sum_x p_X(x)H(\N^{c}(\rho_x)) \geq
g(\bar{\kappa}(\lambda N_S+1))~. 
\end{align}
We can establish these bounds using methods from the previous sections. Thus, we find that the private dynamic capacity region of the quantum-limited amplifier
channel is as follows:
\begin{align}
R+P    & \leq g(\kappa N_S +\bar{\kappa})
-g(\bar{\kappa})  ,\label{eq:PDC-1}\\
P+S    & \leq g(\kappa\lambda N_S +\bar{\kappa})  -g(\bar{\kappa}(\lambda N_S+1))    ,\\
R+P+S    & \leq g(\kappa N_S +\bar{\kappa})  -g(\bar{\kappa}(\lambda N_S+1)) 
).\label{eq:PDC-3}
\end{align}
This rate region is achievable as well, as shown in
\cite{wilde2012information,wilde2012quantum}, and so the union of \eqref{eq:PDC-1}--\eqref{eq:PDC-3} with respect
to $\lambda\in[0,1]$ is equal to the private dynamic capacity region.

\section{Discussion}
\label{sec:discussion}

Theorem~\ref{th:min-out-ent-conj} from  \cite{PTG16} plays an important role in our proof of the  capacity regions for the information trade-off and quantum broadcast settings.
For a long time now, thermal states have been conjectured to minimize the output entropy for pure-loss channels with an input entropy constraint \cite{guha2007classical}. The authors of  \cite{PTG16}
established this result
for all single-mode phase-insensitive bosonic Gaussian channels,
 going well beyond the original conjecture and including it as a special case. The special case for $H_0=0$ was proved for all multi-mode phase-insensitive 
Gaussian channels \cite{giovannetti2014ultimate,Giovannetti2014}. 
After that, de Palma \textit{et al.}~first 
reduced the optimizer problem to the set of all possible passive states \cite{de2015passive} using the technique of majorization \cite{arnold2011inequalities,giovannetti2004minimum,mari2014quantum} and subsequently proved the conjecture for single-mode pure-loss channels \cite{de2016gaussian}. The multi-mode generalization of the results in Ref.~\cite{PTG16}, which would determine capacity regions for pure-loss channels \cite{guha2007classical,wilde2012information}, is still unsolved.

The strongest conjecture proposed so far is the Entropy Photon number Inequality (EPnI) \cite{guha2008entropy} which takes on a role analogous to Shannon's entropy power inequality \cite{shannon1948bell}. The truth of the EPnI subsumes all minimum output entropy conjectures. Although the EPnI has not been proved yet, a different quantum analog of EPI, quantum EPI (qEPI) has been proved recently for a multi-mode lossy channel
\cite{de2014generalization}. Although the qEPI does not imply the truth of the EPnI, the lower bounds given by the two inequalities are extremely close for a large range of parameters \cite{de2014generalization}. This fact strongly suggests the truth of the multi-mode EPnI. We give an upper bound in Appendix~\ref{sec:loss-channel-EPI} for the capacity region of information trade-off over the pure-loss channel by using the qEPI. This represents the first application of the qEPI to the information trade-off problem. The bound given by the qEPI is extremely close to the upper bound, if we assume the multi-mode minimum output entropy conjecture is true. Therefore, it is safe to say that the achievable rate region found in Ref.~\cite{wilde2012information} is the optimal capacity region for all practical purposes. 

\section{Conclusion}
\label{sec:conclusion}

We have determined the capacity region for trade-off coding over quantum-limited amplifier channels and have shown that it can significantly outperform a
time-sharing strategy. We also find that with increasing amplifier gain, the capacity region is shrinking, due to amplification noise from the vacuum. Going beyond the point-to-point setup, we have also determined the classical capacity region for broadcast communication over a quantum-limited amplifier channel, which outperforms the communication rate achieved using conventional coherent detection. The capacity region converges to a linear trade-off form, with the
same maximum rate for both receivers, when the amplifier gain $\kappa$ is large.

One unsolved problem is to determine the trade-off capacity for a pure-loss channel \cite{wilde2012information,wilde2012quantum} and the broadcast capacity for a thermal-noise channel \cite{guha2007classical,guha2008multiple}, which  require a multi-mode version of Theorem~\ref{th:min-out-ent-conj}.
Recall that Appendix~\ref{sec:loss-channel-EPI} shows how it is possible to obtain a good bound for the trade-off capacity region by employing the quantum entropy power inequality \cite{de2014generalization}, but it is likely possible to improve this bound.
On the other hand, the techniques we used in our converse proofs in this work are not applicable to quantum channels which are not degradable. This includes lossy and amplifier channels with thermal noise and pure-loss channels with transmissivity smaller than one-half. We leave the problem of determining the capacities for those channels as future work.

\acknowledgments
We are grateful to
G.~De Palma,
R.~Garcia-Patron,
S.~Guha, 
E.~Martin-Martinez,
and M.~Takeoka for discussions.
HQ acknowledges support from AFOSR, ARO, and NSF.
MMW acknowledges support from NSF Award No.~CCF-1350397.

\appendix

\section{Two properties of $g(x)$}

\label{sec:property}
We  first recall a property of the function
\begin{equation}
g(x) = (x+1)\log_2 (x+1)-x\log_2 x,
\end{equation}
which is helpful for our converse proofs. Recall that $g(x)$ is equal to the entropy of a thermal state with mean photon number $x$.

\begin{theorem}[Theorem A.3 of \cite{guha2008multiple}]
	\label{thm:guha-A-3}
	Given $q \in [0,1]$ a probability distribution $p_X(x)$ and non-negative real numbers $\brac{y_x: 1\leq x \leq n}$, if
	\begin{align}
	\sum_{x=1}^np_X(x)g(y_x) = g(y_0)~,
	\end{align}
	then the following inequality holds
	for $C \geq 0$:
	\begin{align}
	\sum_{x=1}^np_X(x)g(q y_x +C)\geq g(q y_0 +C)~.
	\end{align}
	\end{theorem}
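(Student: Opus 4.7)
The natural approach is to reduce the claim to Jensen's inequality. Introduce the new variables $z_x := g(y_x)$, so that the hypothesis becomes $\sum_x p_X(x)\, z_x = g(y_0)$, and define
\begin{equation}
F(z) \;:=\; g\bigl(q\, g^{-1}(z) + C\bigr).
\end{equation}
The inequality to be proved then reads $\sum_x p_X(x)\, F(z_x) \geq F\bigl(\sum_x p_X(x)\, z_x\bigr)$, so it suffices to show that $F$ is convex on $[0,\infty)$. (The case $q = 0$ is trivial since both sides collapse to $g(C)$; henceforth I assume $q > 0$.)

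I would verify convexity by explicit differentiation. Writing $y = g^{-1}(z)$ and applying the chain rule,
\begin{align}
F'(z) &= \frac{q\, g'(qy + C)}{g'(y)}, \\
F''(z) &= \frac{q\bigl[q\, g''(qy+C)\, g'(y) - g'(qy+C)\, g''(y)\bigr]}{g'(y)^3}.
\end{align}
Because $g'(y) > 0$, convexity of $F$ is equivalent to
\begin{equation}
q\, g''(qy+C)\, g'(y) \;\geq\; g'(qy+C)\, g''(y).
\end{equation}
Substituting $g'(x) = \log_2(1 + 1/x)$ and $g''(x) = -1/[\ln 2 \cdot x(x+1)]$ and clearing denominators, this becomes
\begin{equation}
q\, h(y) \;\leq\; h(qy + C), \qquad h(x) \;:=\; x(x+1)\,\ln\!\left(1 + \frac{1}{x}\right).
\end{equation}

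The remaining step is a two-stage estimate. First, I claim $h$ is monotonically increasing on $(0,\infty)$: a direct computation gives $h'(x) = (2x+1)\ln(1+1/x) - 1$, and the elementary bound $\ln(1+u) \geq 2u/(2+u)$ (whose derivative is $u^2/[(1+u)(2+u)^2] \geq 0$, with value $0$ at $u = 0$) yields $(2x+1)\ln(1+1/x) \geq 2 > 1$ when applied at $u = 1/x$. Hence $h(qy+C) \geq h(qy)$, and it is enough to show $q\, h(y) \leq h(qy)$. Dividing through by $qy$, this reduces to $F_0(y) \leq F_0(qy)$, where $F_0(x) := (x+1)\ln(1+1/x)$. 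The function $F_0$ is decreasing because $F_0'(x) = \ln(1+1/x) - 1/x < 0$, which is just $\ln(1 + u) < u$ at $u = 1/x$. Since $qy \leq y$, $F_0(qy) \geq F_0(y)$, and the chain $q\, h(y) \leq h(qy) \leq h(qy+C)$ closes.

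\textbf{Main obstacle.} The proof is essentially mechanical once $F$ has been identified as the right object to test convexity of. The only nontrivial analytic input is the monotonicity of $h$: the factors $x(x+1)$ and $\ln(1+1/x)$ move in opposite directions, so a quantitative logarithm inequality (like $\ln(1+u) \geq 2u/(2+u)$) is needed to pin down the sign of $h'$. The monotonicity of $F_0$, by contrast, is immediate from the trivial bound $\ln(1+u) < u$. A small continuity remark is also needed to extend convexity across $z = 0$, where $g'(0) = \infty$ makes the derivative formulas blow up even though $F$ itself remains continuous there.
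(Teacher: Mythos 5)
Your proof is correct, and it follows essentially the same route as the argument the paper imports by citation: the paper never reproves Theorem~\ref{thm:guha-A-3} itself, but in the proof of Theorem~\ref{thm:second-prop-gx} it records that the cited original proof hinges on the inequality \eqref{eq:related-ineq}, and your convexity criterion $q\,g''(qy+C)\,g'(y)\ge g'(qy+C)\,g''(y)$ for $F(z)=g(q\,g^{-1}(z)+C)$, once $g'(x)=\log_2(1+1/x)$ and $g''(x)=-1/[\ln 2\,x(x+1)]$ are substituted and denominators cleared, is exactly \eqref{eq:related-ineq}. Your Jensen formulation with an arbitrary distribution $p_X$ also automatically covers the point the paper handles only by remark, namely that the original uniform-distribution argument extends because only convexity is used. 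What you supply beyond the paper's text is a complete, elementary verification of \eqref{eq:related-ineq} for all $q\in[0,1]$ and $C\ge 0$: monotonicity of $h(x)=x(x+1)\ln(1+1/x)$ via the bound $\ln(1+u)\ge 2u/(2+u)$, combined with the decrease of $F_0(x)=(x+1)\ln(1+1/x)$ to get $h(qy+C)\ge h(qy)\ge q\,h(y)$; this is sound (the derivative computations check out, and $q\le 1$ enters precisely where it must, in $F_0(qy)\ge F_0(y)$, consistent with the paper's observation that \eqref{eq:related-ineq} fails in general for $q>1$), and your closing continuity remark at $z=0$ correctly patches the endpoint where $g'(0)=\infty$.
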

	
	As mentioned above, the above inequality is
	Theorem~A.3 in Appendix~C of \cite{guha2008multiple}. Observe
	that
	Ref.~\cite{guha2008multiple}
	proved the inequality for $p_X(x)$ set to the uniform distribution. However, the argument there only relies on concavity of $g(x)$ and thus applies to an arbitrary distribution, as discussed later in \cite{wilde2012quantum}.
	
	Due to the requirement that $q \in [0,1]$, Theorem~\ref{thm:guha-A-3} is not useful for the quantum amplifier channel given that its amplifier gain $\kappa > 1$. To resolve this problem, we prove another property of $g(x)$:
	
	\begin{theorem}\label{thm:second-prop-gx}
		Given $q \in (1,+\infty)$, a probability distribution $p_X(x)$ and non-negative real numbers $\brac{y_x: 1\leq x \leq n}$, if
		\begin{align}
		\sum_{x=1}^n p_X(x)g(y_x) = g(y_0),
		\end{align}
		then 
		\begin{align}
		\sum_{x=1}^n p_X(x)g(q y_x +q-1)\geq g(q y_0 +q-1)~.
		\end{align}
		\end{theorem}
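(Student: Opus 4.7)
My plan is to recast the statement as Jensen's inequality for a single convex function on the entropy axis. Define
\[
F(s) \;:=\; g\!\left(q\,g^{-1}(s) + q-1\right), \qquad s \geq 0 .
\]
Operationally, $F(s)$ is the output entropy of the quantum-limited amplifier of gain $q$ acting on a thermal input of entropy $s$. Setting $s_x := g(y_x)$ and $s_0 := g(y_0)$, the hypothesis reads $\sum_x p_X(x)\,s_x = s_0$ and the desired conclusion reads $\sum_x p_X(x)\,F(s_x) \geq F(s_0)$, which is exactly Jensen's inequality provided $F$ is convex on $[0,\infty)$. So the whole theorem reduces to convexity of $F$.

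To establish convexity I would carry out a direct second-derivative calculation. Writing $t = g^{-1}(s)$ and $u = q(t+1)-1$, and using $g'(x) = \log_2\!\tfrac{x+1}{x}$ and $g''(x) = -1/[x(x+1)\ln 2]$, the inverse-function rule gives $F'(s) = q\,g'(u)/g'(t)$, and then
\[
F''(s) \;=\; \frac{q}{\ln 2\,[g'(t)]^{3}}\!\left[\,\frac{g'(u)}{t(t+1)} \;-\; \frac{q\,g'(t)}{u(u+1)}\,\right].
\]
Since $g'(t) > 0$, the sign of $F''(s)$ is the sign of the bracket, and clearing denominators reduces convexity of $F$ to the one-variable inequality
\[
\phi\bigl(q(t+1)-1\bigr) \;\geq\; q\,\phi(t), \qquad t \geq 0,\ q \geq 1,
\]
where $\phi(x) := x(x+1)\log_2\!\tfrac{x+1}{x}$. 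This is the analytic heart of the theorem; note that it is an equality at $q = 1$ and is consistent with the asymptotics $\phi(x) \sim x/\ln 2$.

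For the $\phi$-inequality I would use the integral representation
\[
\phi(x) \;=\; \frac{1}{\ln 2}\int_{0}^{1}\!\frac{x(x+1)}{x+\sigma}\,d\sigma \;=\; \frac{1}{\ln 2}\!\left[\,x + \tfrac{1}{2} - \int_{0}^{1}\!\frac{\sigma(1-\sigma)}{x+\sigma}\,d\sigma\,\right],
\]
obtained from $\log_2((x+1)/x) = (1/\ln 2)\int_0^1 d\sigma/(x+\sigma)$ together with the polynomial identity $x(x+1) = (x+\sigma)(x+1-\sigma) - \sigma(1-\sigma)$. Substituting at $x = t$ and $x = q(t+1)-1$, and collapsing the cross-term by partial fractions, yields the explicit identity
\[
\phi\bigl(q(t+1)-1\bigr) - q\,\phi(t) \;=\; \frac{q-1}{\ln 2}\!\left[\,\tfrac{1}{2} + \int_{0}^{1}\!\frac{\sigma(1-\sigma)\,[(q+1)t + \sigma + q]}{(qt+q-1+\sigma)(t+\sigma)}\,d\sigma\,\right],
\]
which is manifestly nonnegative for $q \geq 1$ and $t \geq 0$, with equality only at $q=1$.

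The main obstacle is this intermediate reduction: converting convexity of $F$ into a clean one-variable inequality for $\phi$, and then finding a representation of $\phi$ in which the required nonnegativity is transparent — the integral representation is the key trick that turns a seemingly awkward inequality into an obvious one. A natural alternative would be to try to reduce to Theorem~\ref{thm:guha-A-3} by the substitution $z_x = q\,y_x + q - 1$, but this forces the parameters $q' = 1/q$ and $C = -(q-1)/q < 0$, which violates the $C \geq 0$ hypothesis of Theorem~\ref{thm:guha-A-3}. Consequently, a direct reduction to the $q \in [0,1]$ case does not work, and the direct convexity argument above appears to be the right route.
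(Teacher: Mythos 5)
Your proof is correct, and it ends up hinging on exactly the same scalar inequality as the paper's proof: your condition $\phi\bigl(q(t+1)-1\bigr)\geq q\,\phi(t)$ with $\phi(x)=x(x+1)\log_2\frac{x+1}{x}$ is precisely the paper's inequality \eqref{eq:related-ineq} specialized to $C=q-1$. The routes differ in two respects. First, the paper does not re-derive why that inequality suffices; it simply notes that the proof of Theorem~\ref{thm:guha-A-3} in \cite{guha2008multiple} rests on \eqref{eq:related-ineq} and then verifies that inequality for $C=q-1$, whereas you make the reduction fully self-contained by exhibiting it as Jensen's inequality for the convex function $F(s)=g(qg^{-1}(s)+q-1)$ and computing $F''$ explicitly -- a worthwhile clarification, since it removes the dependence on the internal structure of someone else's proof. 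Second, your proof of the scalar inequality is genuinely different: the paper divides \eqref{eq:related-ineq} (with $C=q-1$) by $q(1+x)$, recognizes both sides as values of $h(x)=x\ln\frac{x+1}{x}$, and concludes from monotonicity of $h$ together with $q(1+x)-1\geq x$; you instead use the representation $\log_2\frac{x+1}{x}=\frac{1}{\ln 2}\int_0^1\frac{d\sigma}{x+\sigma}$ to produce an explicit remainder identity that is manifestly nonnegative. The paper's argument is shorter; yours yields a quantitative lower bound on $\phi(q(t+1)-1)-q\phi(t)$ (hence strictness for $q>1$) and generalizes more transparently. The only points worth tightening are cosmetic: note that $g$ is a strictly increasing bijection of $[0,\infty)$ so $g^{-1}$ and hence $F$ are well defined, and that convexity established on the open half-line extends to $s=0$ by continuity, so Jensen applies even when some $y_x=0$.
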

		
		\begin{proof}
			The original proof of
			Theorem~\ref{thm:guha-A-3} depends on the following inequality:
			\begin{multline}
			\log_2\!\left(1+\frac{1}{qx+C}\right)(qx+C)(1+qx+C)\\
			\geq
			\log_2\!\left(1+\frac{1}{x}\right)qx(1+x)~, \label{eq:related-ineq}
			\end{multline}
			which holds for $q \in [0,1]$,
			$x \geq 0$, and $C\geq 0$. When considering $q>1$, the above inequality does not generally hold. However, we can prove that it is true for $C = q-1$.
			Substituting $C = q-1$ in \eqref{eq:related-ineq}, we need to show that
			\begin{align}
			(q(1+x)-1)\log_2\frac{q(1+x)}{q(1+x)-1}\geq x\log_2\frac{1+x}{x}~.
			\end{align}
			Defining $h(x) = x\ln\frac{x+1}{x}$, then we can see that the above inequality is equivalent to the following one:
			\begin{align}
			h(q(1+x)-1)\geq h(x)~.
			\end{align} 
			But 
			\begin{equation}
			\lim_{x\rightarrow 0}h(x) = \lim_{t\rightarrow +\infty}\frac{\ln(1+t)}{t}
			=\lim_{t\rightarrow +\infty} \frac{1}{1+t} = 0~,
			\end{equation}	
			by L'Hospital's rule, 
			and 
			\begin{align}
			h'(x) = \ln\!\left(1+\frac{1}{x}\right) -\frac{1}{1+x}~.
			\end{align}
			Since $h'(0) = +\infty$, $h'(+\infty)= 0$, and $h''(x) = -1/x(1+x)^2$, we have
			\begin{align}
			h'(x)\geq 0~,
			\end{align}
			for $x \geq 0$, and the 
			function $h(x)$ is non-negative and monotonically increasing for non-negative $x$. Now since
			$q(1+x)-1-x = (1+x)(q-1)\geq 0$, we find that
			\begin{align}
			h(q(1+x)-1)\geq h(x)~.
			\end{align}
			This concludes the proof.
			\end{proof}

\section{Coherent-detection schemes}

\label{sec:coherent-detection}

Although we have shown that \eqref{eq:amplify-rate-Bob}--\eqref{eq:amplify-rate-Charlie} is achievable by using coherent-state encoding with a Gaussian distribution, implicitly we have also assumed that it is achieved by some fully quantum measurement scheme. If the two receivers use classical coherent detection instead, the problem reduces to a classical broadcast channel with Gaussian additive noise. We expect such schemes to be outperformed by those achieved with a fully quantum measurement.

One way to calculate the capacity region of the classical degradable broadcast channel is to use the formula from \cite{yard2011quantum} with the same distribution as in \eqref{eq:input-distribution}. Another easier way is to first calculate the capacity of each classical channel to Bob and Charlie. Since each channel is Gaussian with additive noise, 
each capacity should have the following form: 
\begin{align}
C\equiv C(\mathtt{snr})=\frac{1}{2}\log_2(1+\mathtt{snr_{B/C}})~,
\end{align}
where $\mathtt{snr}_{B/C}$ is the signal-to-noise ratio of the channel $A\rightarrow B/C$. Then we can use known results \cite{cover2012elements,el2010lecture} to directly get the capacity region for broadcast channel,
\begin{align}
\label{eq:classical-capacity-region}
\nonumber
R_B &\leq C(\lambda \mathtt{snr}_B),\\
R_C & \leq C\!\left(\frac{(1-\lambda)\mathtt{snr}_C}{\lambda \mathtt{snr}_C+1}\right)~.
\end{align}
For Bob, the channel could be modeled by the following transformation:
\begin{align}
B = \sqrt{\kappa}A +Z~.
\end{align}
If homodyne detection is employed, Bob is measuring one of the quadratures and $B$, $A$, and $Z$ are scalar Gaussian random variables. The noise $Z$ has distribution $Z\sim N(0,\frac{1}{4}+\frac{1}{2}\bar{\kappa})$, where the variance comes from both the vacuum itself and the thermal noise generated from the vacuum \cite{Shap08}. The capacity of the classical Gaussian channel is achieved by input with distribution $A\sim N(0,N_S)$, and therefore we have 
\begin{align}
\mathtt{snr}_B = \frac{\kappa N_S}{\frac{1}{4}(1+2\bar{\kappa})}.
\end{align}
When heterodyne detection is used, $B$, $A$, and $Z$ are complex Gaussian random variables. The real part of the noise has distribution $\Re(Z) \sim N(0,\frac{1}{2}+\frac{1}{2}\bar{\kappa})$ and the same for the imaginary part \cite{Shap08}. The optimal input distribution for each part is
$\Re(A) \sim N(0,N_S/2)$ and $\mathrm{Im}(A) \sim N(0,N_S/2)$ since the total input power is $N_S$. Thus for heterodyne detection we have 
\begin{align}
\mathtt{snr}_B = \frac{\kappa N_S}{(1+\bar{\kappa})}=N_S.
\end{align}
Notice that we need to multiply the capacity formula by a factor of two, to take into account the contribution from each part of the complex variable. The channel to Charlie is modeled by
\begin{align}
C = \sqrt{\bar{\kappa}}A+Z~,
\end{align}
and all the analysis above for Bob still holds. We can write the capacity of each classical channel achieved by coherent detection in a unified way as
\begin{align}
C_{A\rightarrow B} &= \xi\log_2\!\left( 1+\frac{\kappa N_S}{\xi(\xi +\bar{\kappa})}\right)~,\\
C_{A\rightarrow C} &= \xi\log_2\!\left( 1+\frac{\bar{\kappa} N_S}{\xi(\xi +\bar{\kappa})}\right)~,
\end{align}
where $\xi = \frac{1}{2}$ for homodyne detection and $\xi = 1$ for heterodyne detection.
 
Now using \eqref{eq:classical-capacity-region}, we find the capacity region of coherent detection:
\begin{align}
R_B &\leq \xi\log_2\!\left( 1+\frac{\lambda\kappa N_S}{\xi(\xi +\bar{\kappa})}\right)~,\\
R_C &\leq \xi\log_2\!\left( 1+\frac{(1-\lambda)\bar{\kappa} N_S}{\xi(\xi +\bar{\kappa})+\lambda \bar{\kappa} N_S}\right)~,
\end{align}
thus giving \eqref{eq:broadcast-coherent-detection}.

\section{Upper bound for trade-off capacity region of the pure-loss channel given by the qEPI}

\label{sec:loss-channel-EPI}

The capacity region for the information trade-off over a pure-loss channel has been given in \cite{wilde2012information}, provided that a multi-mode minimum output-entropy conjecture is true.
Although the multi-mode conjecture has not been proved yet, the recently proved quantum EPI (qEPI) can give a good upper bound \cite{de2014generalization,konig2014entropy}, holding for $\eta \in[1/2, 1]$. The qEPI is a direct translation of the classical EPI and is as follows:
\begin{align}
\label{eq:qEPI}
2^{H(\rho_B)/n} \geq \lambda_A 2^{H(\rho_A)/n} + \lambda_E 2^{H(\rho_E)/n}~,
\end{align} 
where $\rho_A$ is the input of one beamsplitter port,
$\rho_E$ is the input of the other beamsplitter port,
$\rho_B$ is the output of one port,
and $\lambda_A = \eta, \lambda_E = 1-\eta $ for a pure-loss channel with transmissivity $\eta$.

 When we consider the case when the environment is in the vacuum state, we have $H(\rho_E)=0$ and the following bound holds
\begin{align}
\label{eq:qEPI-vacuum}
H(\rho_B)/n \geq \log_2(\eta 2^{H(\rho_A)/n} +1-\eta)~.
\end{align}
We will use this lower bound in what follows.

Recall the development in Eqs.~(60)--(76) of \cite{wilde2012quantum}. Picking up from there, we have that
\begin{align}
\sum_xp_X(x)H(\rho_x)&= ng(\lambda' N_S)~,\\
\sum_xp_X(x)H(\N(\rho_x))& = ng(\lambda\eta N_S)~,
\end{align}
where $\lambda',\lambda\in[0,1]$.
Now instead of invoking the minimum output-entropy conjecture for a pure-loss channel, we use the lower bound given by the multi-mode qEPI in~\eqref{eq:qEPI-vacuum}:
\begin{align}
\nonumber
g(\lambda\eta N_S)& \geq \sum_x p_X(x)\log_2(\eta 2^{g(\lambda_x' N_{S,x})}+1-\eta)~,\\
& \geq \log_2(\eta 2^{g(\lambda' N_S)}+1-\eta)~.
\end{align}
The last inequality follows from the fact that $f(x)=\log_2(\eta 2^{x}+1-\eta)$ is convex, and we have also used  the equality $\sum_xp_X(x)g(\lambda'_x N_{S,x})=g(\lambda'N_S)$. Rewriting this, we find that
\begin{align}
& \!\!\!\!\sum_x p_X(x) H(\rho_x) \nonumber\\
& =ng(\lambda'N_S)\\
& \leq n\log_2\left[\frac{1}{\eta}\left(2^{g(\lambda\eta N_S)}-(1-\eta\right)\right]~,
\end{align}
which replaces Eq.~(60) in Ref.~\cite{wilde2012quantum}.

 The lower bound given in Eq.~(63) of
 \cite{wilde2012quantum} will be replaced by a new lower bound found by invoking the qEPI. Using \eqref{eq:qEPI-vacuum} for a pure-loss channel with $\eta'=(1-\eta)/\eta$, we find that
\begin{align}
\nonumber
& \!\!\!\! \sum_x p_X(x) H(\N^c(\rho_x)) \\
&\geq \sum_x p_X(x) n\log_2(\eta'2^{g(\eta \lambda_x N_{S,x})}+1-\eta')\\
&\geq n\log_2(\eta' 2^{\sum_xp(x)g(\eta \lambda_x N_{S,x} )}+1-\eta')~\\
&= n\log_2\!\left(\frac{1-\eta}{\eta}2^{g(\eta \lambda N_S +1-\eta )}+\frac{2\eta-1}{\eta}\right)~.
\end{align}
The two inequalities follow by invoking the qEPI and convexity of $f(x)$ as defined and used previously.
In the last step, we have used $\sum_x p_X(x) g(\eta\lambda_x N_{S,x}) = g(\lambda\eta N_S)$.

In summary, an upper bound for the trade-off capacity region of the pure-loss channel, derived from the qEPI,  follows from the inequalities below:
\begin{align}
\frac{1}{n}H(\N^{\otimes n}(\rho))&\leq g(\eta N_S)
~,\nonumber\\
\frac{1}{n}\sum_x p_X(x)H(\rho_x)&\leq \log_2\left[\frac{1}{\eta}\left(2^{g(\lambda\eta N_S)}-(1-\eta)\right)\right]~,\nonumber\\
\frac{1}{n}\sum_x p_X(x)H(\N^{\otimes n}(\rho_x)) &\leq
g(\eta\lambda N_S )~,\nonumber\\
\frac{1}{n}\sum_x p_X(x)H((\N^{c})^{\otimes n}(\rho_x)) &\geq
\log_2\left[\frac{1-\eta}{\eta}2^{g( \lambda\eta N_S )}+\frac{2\eta-1}{\eta}\right]~.
\end{align}

For the broadcast capacity region of a pure-loss channel, the upper bound is given in Sec.~IV.C of \cite{de2014generalization}. 

\bibliography{amplifierPRL}
\bibliographystyle{unsrt}

\end{document}